\theoremstyle{thmstyleone}%
\newtheorem{theorem}{Theorem}
\theoremstyle{thmstyletwo}%
\newtheorem{example}{Example}%
\theoremstyle{thmstylethree}%
\newtheorem{definition}{Definition}%
\newtheorem{lemma}[theorem]{Lemma}
\newtheorem{corollary}[theorem]{Corollary}
\newcommand{\tr}{{\mathrm{Tr}}}
\newcommand{\gf}{{\mathbb{F}}}
\newcommand{\wt}{{\mathtt{wt}}}
\newcommand{\cC}{{\mathcal{C}}}
\newcommand{\bc}{{\mathbf{c}}}
\newcommand{\bg}{{\mathbf{g}}}
\begin{document}

\title[ ]{The augmented codes of a family of linear codes with locality 2}

\author*[1,2]{\fnm{Ziling} \sur{Heng}}\email{zilingheng@chd.edu.cn}

\author[1]{\fnm{Keqing} \sur{Cao}}\email{ckq0009282022@163.com}

\affil*[1]{\orgdiv{School of Science}, \orgname{Chang'an University}, \orgaddress{ 
\city{Xi'an}, \postcode{710064},
\country{China}}}

\affil[2]{\orgdiv{National Mobile Communications Research Laboratory}, \orgname{Southeast University}, \orgaddress{ 
\city{Nanjing}, \postcode{211111},
\country{China}}}


\abstract{Abstract¡ª In this paper, we first generalize the class of linear codes by
Ding and Ding (IEEE TIT, 61(11), pp. 5835-5842, 2015).
Then we mainly study the augmented codes of this generalized class of linear codes.
For one thing, we use Gaussian sums to determine the parameters and weight distributions
of the augmented codes in some cases.
It is shown that the augmented codes are self-orthogonal and have only a few nonzero weights.
For another thing, the locality of the augmented codes is proved to be 2, which indicates
the augmented codes are useful in distributed storage.
Besides, the augmented codes are projective as the minimum distance of
their duals is proved to be 3.  In particular, we obtain several (almost) optimal linear codes and locally recoverable codes.

}

\keywords{Linear code, self-orthogonal code, locality, Gaussian sum}



\maketitle
\section{Introduction}\label{sec1}
Let $\gf_q$ be the finite field with $q$ elements, where $q$ is a  power of a prime $p$.
Denote by $\gf_q^*=\gf_q\backslash \{0\}$. Let $\wt(\bc)$ be the Hamming weight of a vector $\bc\in \gf_q^n$.
\subsection{Linear codes}
Linear codes have been widely studied as they have nice application in communication systems and data storage systems.
For a positive integer $n$, if $\cC\subseteq \gf_q^n$ is a $k$-dimensional $\gf_q$-linear subspace of $\gf_q^n$, then it is called an $[n,k,d]$ linear code over $\mathbb{F}_q$, where $d=\min\limits_{\bc \in \cC\backslash \{\textbf{0}\}}\wt(\bc)$ denotes its minimum  distance.
Let $A_{i}:=\sharp \left\{\bc \in \cC: \wt(\bc)=i \right\}$, where $0 \leq i \leq n$.
Then  $(1,A_{1},A_{2}, \cdots ,A_{n})$ is called the weight distribution of $\cC$ and the polynomial
$$
A(z)=1+A_{1}z+A_{2}z^2+ \cdots +A_{n}z^n
$$
is referred to as the weight enumerator of $\cC$.
It is interesting to study the weight distribution of a linear code as  it is used to estimate the error correcting capability and the probability of error detection and correction of the code with respect to some algorithms \cite{K}. Usually, a linear code $\cC$ is said to be a $t$-weight code where $t:=\sharp \{A_i:A_i \neq 0, 0 \leq i \leq n\}$. Linear codes with only a few weights, e.g. two-weight and three weight codes, have been widely studied and constructed in the literature \cite{D1, D2, D3, D4, D5, LC, LY, HZ2, HZ3, HZ4}.

Let $\langle \textbf{a} , \textbf{b} \rangle$ denote the standard inner product of two vectors $\textbf{a},\textbf{b}$ of the same length.
Then we define the dual of an $[n,k]$ linear code $\cC$ over $\gf_q$ as
$$
\mathcal{C}^{\perp}=\left\{ \mathbf{u}^{\perp} \in \mathbb{F}_{q}^{n}: \langle  \mathbf{u}^{\perp}, \mathbf{u} \rangle=0, \mbox{ $\forall$ }\mathbf{u} \in \mathcal{C} \right\}.
$$
It is clear that $\cC^{\perp}$ is an $[n,n-k]$ linear code. In particular, if $\cC \subseteq \cC^{\perp}$,  then $\cC$ is said to be self-orthogonal.
If $\cC=\cC^\perp$, then $\cC$ is said to be self-dual.  Self-orthogonal codes are of particular interest as they have nice applications in quantum codes, lattices and many other fields \cite{CSS, Wan}. The reader is referred to \cite{LH, WL, WY} for some known self-orthogonal codes.

For a linear code $\cC$ over $\gf_q$, if all the codewords have weights divisible by an integer $\Delta$, then $\cC$ is said to be $\Delta$-divisible.
Divisible code was first introduced by Ward in 1981 \cite{Ward}.
 If $\gcd(\Delta,q)= 1$, then $\cC$ is equivalent to a code derived by taking a linear
code over $\gf_q$, repeating each coordinate $\Delta$ times, and adding on enough $0$ entries to make a code whose length is that of $\cC$ \cite{Ward}.
Thus we are only interested in $\Delta$-divisible codes such that $\Delta$ is a power of the characteristic of $\gf_q$.
Divisible codes have many interesting applictions in Galois geometries, subspace codes, partial spreads,
vector space partitions, and Griesmer codes \cite{KK, Kurz, Ward}.

Recently, Li and Heng established a sufficient condition for a divisible code to be self-orthogonal.
\begin{lemma}\cite[Theorem 11]{LH}\label{LH}
Let $q=p^e$, where $p$ is an odd prime. Let $\cC$ be an $[n,k,d]$ linear code over $\gf_q$ with $\mathbf{1} \in\cC$, where $\mathbf{1}$ is all-1 vector of length $n$. If the linear code $\cC$ is $p$-divisible, then $\cC$ is self-orthogonal.
\end{lemma}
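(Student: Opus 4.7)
The plan is to prove that $\langle \bc_1, \bc_2\rangle = 0$ for every $\bc_1, \bc_2 \in \cC$, by converting $p$-divisibility into a polynomial identity in three variables where the inner product appears as an extractable coefficient. The key algebraic observation is that for $a \in \gf_q$ one has $a^{q-1} = 1$ if $a \neq 0$ and $a^{q-1} = 0$ otherwise, so for any $\bc \in \gf_q^n$ the element $\sum_{i=1}^n c_i^{q-1} \in \gf_q$ is precisely $\wt(\bc)$ reduced modulo $p$. Hence $\cC$ is $p$-divisible if and only if $\sum_i c_i^{q-1} = 0$ in $\gf_q$ for every $\bc \in \cC$.

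Next, I would exploit the hypothesis $\mathbf{1} \in \cC$ to introduce a third scalar parameter. Fix $\bc_1, \bc_2 \in \cC$. Since $\cC$ is linear, $x\bc_1 + y\bc_2 + z\mathbf{1} \in \cC$ for all $x, y, z \in \gf_q$, and the previous reformulation applied to this codeword gives
$$P(x,y,z) := \sum_{i=1}^n \bigl(x c_{1,i} + y c_{2,i} + z\bigr)^{q-1} = 0 \quad \text{for all } (x,y,z) \in \gf_q^3.$$
Because $P \in \gf_q[x,y,z]$ has total degree $q-1$, its degree in each variable is strictly less than $q$; the standard fix-two-variables-and-apply-the-single-variable-result induction therefore forces $P$ to be the zero polynomial.

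The final step is a coefficient extraction. Expanding via the multinomial theorem,
$$P(x,y,z) = \sum_{j+k+l=q-1} \binom{q-1}{j,k,l}\Bigl(\sum_{i=1}^n c_{1,i}^j c_{2,i}^k\Bigr)\, x^j y^k z^l,$$
and reading off the coefficient of $xyz^{q-3}$, corresponding to $(j,k,l) = (1,1,q-3)$ (a valid triple because $p$ odd implies $q \geq 3$), I obtain
$$(q-1)(q-2)\,\langle \bc_1, \bc_2\rangle = 2\,\langle \bc_1, \bc_2\rangle = 0 \quad \text{in } \gf_q.$$
Since $p$ is odd, $2$ is invertible, so $\langle \bc_1, \bc_2\rangle = 0$; as $\bc_1, \bc_2$ were arbitrary, $\cC \subseteq \cC^\perp$.

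The main delicacy is the choice of the right auxiliary monomial. Without the third variable $z$, the only accessible coefficients would be of the form $\binom{q-1}{j}\sum_i c_{1,i}^j c_{2,i}^{q-1-j}$; these sums involve high powers of the entries and do not reduce to $\langle \bc_1,\bc_2\rangle$. Opening a third slot via $\mathbf{1}$ is exactly what allows a pure $xy$-term to appear in the expansion, so both hypotheses enter essentially: $\mathbf{1} \in \cC$ supplies the extra variable, while the oddness of $p$ is needed both to guarantee $(q-1)(q-2) \equiv 2 \not\equiv 0 \pmod p$ and to keep the exponent $l = q-3$ nonnegative.
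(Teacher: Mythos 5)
Your proof is correct. Note, however, that the paper does not prove this lemma at all: it is quoted verbatim from \cite[Theorem 11]{LH}, so there is no in-paper argument to compare against, and your write-up stands as a self-contained justification. Judged on its own merits the route is sound: the identification of $\wt(\bc)\bmod p$ with $\sum_{i}c_i^{q-1}\in\gf_q$ turns $p$-divisibility into the vanishing of $P(x,y,z)=\sum_{i}(xc_{1,i}+yc_{2,i}+z)^{q-1}$ on all of $\gf_q^{3}$ (linearity and $\mathbf{1}\in\cC$ enter exactly here), the partial degrees of $P$ are at most $q-1<q$, so $P$ is the zero polynomial, and reading off the coefficient of $xyz^{q-3}$ yields $(q-1)(q-2)\langle\bc_1,\bc_2\rangle=2\langle\bc_1,\bc_2\rangle=0$ in $\gf_q$, whence $\cC\subseteq\cC^{\perp}$ because $p$ is odd. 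In a polished version you should state explicitly the two background facts you lean on: that a polynomial over $\gf_q$ of degree less than $q$ in each variable which vanishes on all of $\gf_q^{3}$ is identically zero, and that $q\geq 3$ so $(1,1,q-3)$ is a legitimate exponent triple. One small overstatement: for $p=2$ the argument breaks because $2=0$ in $\gf_q$, not essentially because of the exponent $q-3$ (which is nonnegative for every $q\geq 4$); this does not affect the validity of your proof for odd $p$.
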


\subsection{Locally recoverable codes}
In distributed and cloud storage systems, locally recoverable codes (LRCs for short) are used to recover the data.
LRCs were first introduced by Gopalan et al. in 2012 \cite{G}, and have been implemented in
practice by Microsoft and Facebook \cite{HS, S}. An LRC with locality $r$ is a linear code such that  any symbol
in the encoding is a function of $r$ other symbols. In other words, any symbol can be repaired from
at most $r$ other code symbols. It is known that the repair cost of an $[n,k,d]$ LRC with locality $r\ll k$ is lower
in comparison to MDS codes. The conventional definition of LRC is given as follows.

\begin{definition}\cite{HY}
For an $[n,k,d]$ linear code $\cC$ over $\gf_q$ with generator matrix $G=[\textbf{g}_1,\textbf{g}_2,\cdots,\textbf{g}_n]$,
if each column vector $\textbf{g}_i$ is a linear combination of $l$ $(\leq r)$ other columns of $G$, then $r$ is referred to as the locality of $\cC$ and $\cC$
is called an $[n,k,d; r]_q$-LRC.
\end{definition}

Similarly to the classical linear codes, there also exist some tradeoffs among the parameters of LRCs.
The following provides an upper bound on the dimension of an $[n,k,d; r]_q$-LRC.

\begin{lemma}\cite[Cadambe-Mazumdar bound]{CM}\label{lem-CMbound}
Let $\mathbb{Z}^{+}$ denote the set of all positive integers.
For any $[n,k,d; r]_q$-LRC, we have
\begin{eqnarray}\label{eqn-CMbound}
k \leq \mathop{\min}_{t \in \mathbb{Z}^{+}}\left [rt+k_{\text{opt}}^{(q)}(n-t(r+1),d)\right],
\end{eqnarray}
where $k_{\text{opt}}^{(q)}(n,d)$ is the largest possible dimension of a linear code with length $n$, minimum distance $d$ and alphabet size $q$.
\end{lemma}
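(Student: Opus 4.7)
The plan is to combine a greedy covering argument with a rank--nullity decomposition of $\dim(\cC)$. For each $t\geq 1$, I would construct a set $S_t\subseteq\{1,\ldots,n\}$ with $|S_t|\leq t(r+1)$ and $\dim(\cC|_{S_t})\leq rt$. Starting from $S_0=\emptyset$, at step $j$ I pick any $i_j\notin S_{j-1}$, let $R_{i_j}\subseteq\{1,\ldots,n\}\setminus\{i_j\}$ be a recovery set of size at most $r$ supplied by the locality hypothesis, and set $S_j=S_{j-1}\cup\{i_j\}\cup R_{i_j}$. The size bound is immediate. The structural input is that the column $\bg_{i_j}$ of the generator matrix is a linear combination of $\{\bg_i:i\in R_{i_j}\}$, so on $\{i_j\}\cup R_{i_j}$ the coordinate $i_j$ of every codeword is determined by the others, whence $\dim(\cC|_{\{i_j\}\cup R_{i_j}})\leq r$.

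By subadditivity of projection dimension, $\dim(\cC|_{A\cup B})\leq\dim(\cC|_A)+\dim(\cC|_B)$ for any $A,B\subseteq\{1,\ldots,n\}$ (since $\ker(\pi_{A\cup B})=\ker(\pi_A)\cap\ker(\pi_B)$ together with the inequality $\dim(U\cap V)\geq\dim U+\dim V-\dim(\cC)$). Iterating yields $\dim(\cC|_{S_t})\leq rt$. Now decomposing $\cC$ through the projection $\pi_{S_t}:\cC\to\gf_q^{S_t}$ gives
\[
k=\dim(\image(\pi_{S_t}))+\dim(\ker(\pi_{S_t})).
\]
The image contributes at most $rt$; the kernel consists of codewords vanishing on $S_t$, hence is naturally a linear code on the complement of length at least $n-t(r+1)$, and every nonzero kernel element is a nonzero codeword of $\cC$, so its minimum distance is at least $d$. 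Consequently $\dim(\ker(\pi_{S_t}))\leq k_{\text{opt}}^{(q)}(n-|S_t|,d)\leq k_{\text{opt}}^{(q)}(n-t(r+1),d)$, the last step by monotonicity of $k_{\text{opt}}^{(q)}(\cdot,d)$ in the length (which follows by appending a zero coordinate to an optimal code). Summing the two bounds gives $k\leq rt+k_{\text{opt}}^{(q)}(n-t(r+1),d)$, and taking the minimum over $t\in\mathbb{Z}^+$ delivers the lemma.

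The main obstacle I foresee is making the subadditivity step airtight when consecutive greedy groups overlap, since one cannot just sum sizes naively; the rank--nullity identity above handles this but must be invoked rather than assumed. A secondary issue is the degenerate case when $t(r+1)\geq n$ or when the greedy procedure exhausts $\{1,\ldots,n\}$ before step $t$: then the kernel is trivial and the inequality degenerates to $k\leq rt$, consistent with the convention $k_{\text{opt}}^{(q)}(m,d)=0$ for $m<d$.
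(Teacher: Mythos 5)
The paper does not actually prove this lemma (it is quoted from \cite{CM}), so your proposal must stand on its own; the overall strategy (greedy covering by recovery sets, then rank--nullity and shortening) is indeed the standard linear-code route to this bound, but as written there is a genuine gap in your final inequality. Your greedy set only satisfies $|S_t|\leq t(r+1)$, hence $n-|S_t|\geq n-t(r+1)$, and monotonicity of $k_{\text{opt}}^{(q)}(\cdot,d)$ in the length (your zero-padding argument) gives exactly the reverse of what you claim: $k_{\text{opt}}^{(q)}(n-t(r+1),d)\leq k_{\text{opt}}^{(q)}(n-|S_t|,d)$. So your argument only yields $k\leq rt+k_{\text{opt}}^{(q)}(n-|S_t|,d)$, which is strictly weaker than the Cadambe--Mazumdar bound whenever the recovery sets overlap and $|S_t|<t(r+1)$; the step passing to $k_{\text{opt}}^{(q)}(n-t(r+1),d)$ fails.

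The missing idea is a finer accounting that converts overlap into a dimension saving, so that $S_t$ can be padded up to size exactly $t(r+1)$. At step $j$ let $a_j=|(\{i_j\}\cup R_{i_j})\setminus S_{j-1}|$ be the number of newly added coordinates. Since $i_j\notin S_{j-1}$ and the $i_j$-th coordinate of every codeword is a fixed linear combination of the coordinates in $R_{i_j}\subseteq S_j$, the forgetful map $\cC|_{S_j}\to\cC|_{S_j\setminus\{i_j\}}$ is injective, so $\dim(\cC|_{S_j})\leq\dim(\cC|_{S_{j-1}})+a_j-1$. Summing gives $\dim(\cC|_{S_t})\leq |S_t|-t$, not merely $rt$. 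Now (assuming $t(r+1)\leq n$; otherwise the bound is trivial or handled by the convention you mention) enlarge $S_t$ by arbitrary further coordinates to a set $S$ with $|S|=t(r+1)$; each added coordinate raises the projection dimension by at most one, so $\dim(\cC|_S)\leq (|S_t|-t)+\bigl(t(r+1)-|S_t|\bigr)=rt$, while now $n-|S|=n-t(r+1)$ exactly. Rank--nullity and your shortening argument (the codewords vanishing on $S$, restricted to the complement, form a code of length $n-t(r+1)$ and minimum distance at least $d$) then give $k\leq rt+k_{\text{opt}}^{(q)}(n-t(r+1),d)$ for every $t$, which is the lemma. The rest of your proposal --- the subadditivity justification, the rank--nullity decomposition, and the distance claim for the shortened code --- is correct.
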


The upper bound on the minimum distance of an $[n,k,d; r]_q$-LRC is given as follows.

\begin{lemma}\cite[Singleton-like bound]{G}\label{like}
For any $[n,k,d; r]_q$-LRC, its minimum distance satisfies
\begin{eqnarray}
d \leq n-k-\left \lceil \frac{k}{r} \right \rceil+2.
\end{eqnarray}
\end{lemma}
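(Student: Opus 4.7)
The plan is to reformulate the minimum distance as $d = n - \max\{|S| : S \subseteq [n],\ \rank(G|_S) < k\}$, where $G|_S$ denotes the restriction of the generator matrix $G$ to the columns indexed by $S$: indeed, $\rank(G|_S) < k$ is equivalent to the existence of a nonzero $\bu \in \gf_q^k$ with $\bu G|_S = \bzero$, which produces a nonzero codeword $\bu G$ whose support lies in $[n] \setminus S$. It therefore suffices to exhibit $S \subseteq [n]$ with $\rank(G|_S) \leq k-1$ and $|S| \geq k + \lceil k/r \rceil - 2$, since this immediately yields $d \leq n - |S| \leq n - k - \lceil k/r \rceil + 2$.

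Set $t := \lceil k/r \rceil$. I would build $S$ greedily. Starting from $S_0 = \emptyset$, at step $j = 1, \ldots, t-1$ pick any $a_j \in [n] \setminus S_{j-1}$ whose column $\bg_{a_j}$ lies outside $\mathrm{span}(G|_{S_{j-1}})$; by the LRC hypothesis, there is a repair set $A_{a_j} \subseteq [n] \setminus \{a_j\}$ with $|A_{a_j}| \leq r$ and $\bg_{a_j} \in \mathrm{span}(G|_{A_{a_j}})$. Put $S_j := S_{j-1} \cup \{a_j\} \cup A_{a_j}$. The target invariant is $|S_j| - \rank(G|_{S_j}) \geq j$.

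The per-step bookkeeping is the heart of the argument. Writing $B_j := A_{a_j} \setminus S_{j-1}$, one has $|S_j| - |S_{j-1}| = 1 + |B_j|$ (since $a_j \notin S_{j-1}$), while $\bg_{a_j} \in \mathrm{span}(G|_{A_{a_j}}) \subseteq \mathrm{span}(G|_{S_{j-1} \cup B_j})$ forces $\rank(G|_{S_j}) \leq \rank(G|_{S_{j-1}}) + |B_j|$. Subtracting delivers the $+1$ increment to the excess $|S| - \rank(G|_S)$; in particular $\rank(G|_{S_j}) - \rank(G|_{S_{j-1}}) \leq r$, so after $t-1$ steps $\rank(G|_{S_{t-1}}) \leq (t-1)r < k$, which both validates the feasibility of each greedy pick and shows $\rank(G|_{S_{t-1}}) \leq k - 1$.

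Finally, if $\rank(G|_{S_{t-1}}) < k-1$, I would pad by adjoining columns of $G$ one at a time, each chosen to lift the rank by exactly $1$ (possible because $G$ has full row rank $k$), stopping once the rank equals $k-1$. Such padding preserves the excess, so the resulting set $S^*$ satisfies $|S^*| - (k-1) \geq t-1$, i.e.\ $|S^*| \geq k + t - 2$, and the opening paragraph delivers the bound. I expect the main obstacle to be the invariant step, in particular ensuring the $+1$ gain in the excess when $A_{a_j}$ happens to overlap $S_{j-1}$ so that $|B_j| < r$: the estimates $|S_j| - |S_{j-1}| = 1 + |B_j|$ and $\rank(G|_{S_j}) - \rank(G|_{S_{j-1}}) \leq |B_j|$ are crafted precisely to accommodate this case, since both sides shrink by the same amount.
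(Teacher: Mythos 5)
Your proposal is correct. Note, however, that the paper itself offers no proof of this lemma: it is quoted verbatim from Gopalan et al.\ \cite{G} (the Singleton-like bound), so there is no internal argument to compare against. What you have written is essentially the standard proof from that reference, organized cleanly: the reformulation $d = n - \max\{|S| : \rank(G|_S)<k\}$ is valid because $G$ may be taken with full row rank $k$, so a nonzero $\bu$ in the left kernel of $G|_S$ yields a nonzero codeword supported off $S$, and conversely the complement of a minimum-weight support has column rank $<k$. Your bookkeeping is sound: the choice of $a_j$ outside the current span is feasible since $\rank(G|_{S_{j-1}}) \leq (j-1)r \leq (t-2)r < k$, the identity $|S_j|-|S_{j-1}| = 1+|B_j|$ together with $\rank(G|_{S_j}) \leq \rank(G|_{S_{j-1}})+|B_j|$ (which holds because $\bg_{a_j} \in \mathrm{span}(G|_{A_{a_j}}) \subseteq \mathrm{span}(G|_{S_{j-1}\cup B_j})$) gives the unit gain in the excess even when the repair set overlaps $S_{j-1}$, and after $t-1$ steps $\rank(G|_{S_{t-1}}) \leq (t-1)r < k$. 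The padding step preserves the excess and is possible precisely because $G$ has rank $k$, so the final set has rank exactly $k-1$ and size at least $k+\lceil k/r\rceil - 2$, which yields the bound; the edge case $r \geq k$ (so $t=1$, no greedy steps) degenerates correctly to the classical Singleton bound. The one hypothesis you should state explicitly is that every column admits a repair set of size at most $r$, which is exactly the paper's Definition of locality, so the argument is complete as it stands.
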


LRCs are said to be distance-optimal ($d$-optimal for short) when they achieve the Singleton-like bound.
LRCs are said to be almost distance-optimal (almost $d$-optimal for short) when they meet the Singleton-like
bound minus one with equality. LRCs are said to be dimension-optimal ($k$-optimal for short) when they achieve the Cadambe-Mazumdar
bound. LRCs are said to be almost dimension-optimal (almost $k$-optimal for short) when they meet the
Cadambe-Mazumdar bound minus one with equality. An LRC achieveing either of
these two bounds is said to be optimal.
Constructing optimal or almost optimal LRCs has been an interesting research topic in recent years.

\subsection{The objectives of this paper}
To state the objectives of this paper, we recall a well-known construction of linear codes.
 Let $D=\{x_1, x_2, \cdots, x_n\} \subseteq \gf_{q^{m}}$, where $q$ is a power of a prime $p$.
 Define the trace function from $\gf_{q^m}$ to $\gf_{q}$ by
 $$\tr_{q^m/q}(x)=x+x^q+x^{q^2}+\cdots+x^{q^{m-1}},\ x \in \gf_{q^m}.$$
 Define a linear code of length $n$ over $\gf_q$ by
\begin{eqnarray}\label{eq-CD}
\cC_D=\left\{\left(\tr_{q^m/q}(bx_1), \tr_{q^m/q}(bx_2), \cdots, \tr_{q^m/q}(bx_n)\right):b\in \gf_{q^m}\right\}.
\end{eqnarray}
Then the set $D$ is called the defining set of $\cC_D$. Note that $\cC_D$ has dimension at most $m$ and the ordering of the elements in $D$ dose not affect the parameters and weight distribution of $\cC_D$. This general construction was introduced by Ding et al. in \cite{D2, D3} to construct linear codes with only a few weights.

In \cite{D4}, Ding and Ding used the defining set $D_1=\left\{x\in \gf_{p^m}^*:\tr_{p^m/p}(x^2)=0\right\}$ to construct a family of $p$-ary linear codes as
$$\cC_{D_1}=\left\{\left(\tr_{p^m/p}(bx)\right)_{x\in D_1}:b\in \gf_{p^m}\right\}.$$
The parameters and weight distributions of $\cC_{D_1}$ were respectively determined for odd $m$ and even $m$ in \cite{D4}.
Now we generalize this family of linear codes as follows. Let $m_1$ and $m_2$ be positive integers such that $m_1\mid m$ and $m_2\mid m$. Choose the defining set
\begin{eqnarray}\label{D}
D=\left\{x \in \gf_{p^m}:\tr_{p^m/p^{m_1}}(x^2)=0\right\}.
\end{eqnarray}
Define a linear code over $\gf_{p^{m_2}}$ by
$$\cC_{D}=\left\{\left(\tr_{p^m/p^{m_2}}(bx)\right)_{x\in D}:b\in \gf_{p^m}\right\}.$$
Note that $\cC_{D}$ is the same as $\cC_{D_1\cup \{0\}}$ if $m_1=m_2=1$. Hence $\cC_{D}$ is a generalization of $\cC_{D_1\cup \{0\}}$.
Now we define the augmented code of $\cC_{D}$ by
\begin{eqnarray}\label{CD}
\overline{\cC_{D}}=\left\{\left(\tr_{p^m/p^{m_2}}(bx)\right)_{x \in D} + c\mathbf{1}:b \in \gf_{p^m}, c \in \gf_{p^{m_2}}\right\},
\end{eqnarray}
where $\textbf{1}$ is the all-1 vector of the same length as that of $\overline{\cC_{D}}$. Note that $\overline{\cC_{D}}$ may have larger dimension that that of $\cC_{D}$ if
 $\textbf{1}\not\in \cC_{D}$. The augmentation technique was used in \cite{DT} to construct linear codes holding $t$-designs. In \cite{DT}, it was pointed out that determining the minimum distance of the augmented code of a linear code is in general difficult as we require the complete weight information of the original code.

 The main objective of this paper is to study the augmented code $\overline{\cC_{D}}$ in the following two cases:
 \begin{enumerate}
 \item[(i)] $m_2\mid m_1 \mid m$;
 \item[(ii)] $m_1\mid m_2 \mid m$.
 \end{enumerate}
The parameters and weight distributions of $\overline{\cC_{D}}$ are respectively determined in these two cases.
By the weight distribution of $\overline{\cC_{D}}$ and Lemma \ref{LH}, it is shown that  $\overline{\cC_{D}}$ is $p$-divisible and self-orthogonal in these cases.
Besides, the locality of $\overline{\cC_{D}}$ is proved to be 2, which indicates
$\overline{\cC_{D}}$ is useful in distributed storage.
We also prove that $\overline{\cC_{D}}$ is projective as the minimum distance of
its dual is 3.  In particular, we obtain several (almost) optimal linear codes and locally recoverable codes.

\section{Preliminaries}\label{sec2}
In this section, we recall some known results on characters and Gaussian sums over finite fields.
\subsection{Characters over finite fields}
Let $q=p^m$ with $p$ a prime and $m$ a positive integer. Let $\mathbb{C}^*$ be the set of all nonzero complex numbers.
Denote by $\zeta_p$ the primitive $p$-th root of complex unity.
 An additive character of $\gf_q$ is a function $\chi$ from $\gf_q$ to $\mathbb{C}^*$ such that $
\chi(x+y)=\chi(x)\chi(y)$ for any pair $(x,y)\in \gf_q \times \gf_q$. For each $a \in \gf_q$, the function
\begin{eqnarray*}\label{eq1}
\chi_a(x)=\zeta_p^{\tr_{q/p}(ax)},\ x\in \gf_q,
\end{eqnarray*}
defines an additive character of $\gf_q$, where $\tr_{q/p}(x)$ is the trace function from $\gf_q$ to $\gf_p$.
Then the set $\widehat{\gf}_q:=\{\chi_a:a\in \gf_q\}$ consists of all the additive characters of $\gf_q$.
In particular, $\chi_0$ is called the trivial additive character and $\chi_1$ is referred to as the canonical additive character of $\gf_q$.
Besides, we have $\chi_a(x)=\chi_1(ax)$ for $a,x\in \gf_q$. The orthogonal relation of additive characters (see \cite{L}) is given by
\begin{eqnarray*}
\sum_{x\in \gf_q}\chi_a(x)=\begin{cases}
q    &\text{if $a=0,$}\\
0     &\text{otherwise.}
\end{cases}
\end{eqnarray*}

Let $\gf_q^*=\langle\alpha\rangle$. A multiplicative character of $\gf_q$ is a function $\psi$ from $\gf_q^*$ to  $\mathbb{C}^*$ such that
$$\psi(xy)=\psi(x)\psi(y),\ x,y\in \gf_q^*.$$
For each $0\leq j\leq q-2$, all the multiplicative characters of $\gf_q$ can be given by
$$\psi_j(\alpha^k)=\zeta_{q-1}^{jk}$$ for $k=0,1,\cdots,q-2$, where $0\leq j \leq q-2$.
That is to say, the set $\widehat{\gf_q^*}:=\{\psi_j:j=0,1,\cdots,q-2\}$ gives all multiplicative characters of $\gf_q$ and is a multiplicative group of order $q-1$.
In particular, for odd $q$ and $j=(q-1)/2$, $\eta:=\psi_{(q-1)/2}$ is called the quadratic character of $\gf_q$. Besides, $\psi_0$ is called the trivial multiplicative character of $\gf_q$.
The orthogonal relation of multiplicative characters (see \cite{L}) is given by
\begin{eqnarray*}
\sum_{x\in\gf_q^*}\psi_j(x)=\begin{cases}
q-1    &\text{if $j=0,$}\\
0     &\text{if $j \neq 0$.}
\end{cases}
\end{eqnarray*}

\subsection{Gaussian sums}
Let $\psi$ and $\chi$ are multiplicative and additive characters of $\gf_q$, respectively.
The Gaussian sum is defined by
$$G(\psi,\chi)=\sum_{x\in \gf_{q}^*}\psi(x)\chi(x).$$
In particular, $G(\eta,\chi)$ is called the quadratic Gaussian sum.
The explicit values of Gaussian sums are known only for a few special cases. The value of quadratic Gaussian
sum is given as follows.

\begin{lemma}[\cite{L}, Theorem 5.15]\label{lem-2N}
Let $q=p^m$ with $p$ an odd prime. Then
\begin{eqnarray*}
G(\eta,\chi_1)&=&(-1)^{m-1}(\sqrt{-1})^{(\frac{p-1}{2})^2m}\sqrt{q}.
\end{eqnarray*}
\end{lemma}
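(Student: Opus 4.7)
The plan is to reduce the claim to the classical quadratic Gauss sum over the prime field $\gf_p$ and then lift it to $\gf_q$ via the Hasse--Davenport relation. First I would verify that the quadratic character $\eta$ of $\gf_q$ coincides with $\eta_0 \circ \Norm_{q/p}$, where $\eta_0$ is the quadratic character of $\gf_p$: for $x \in \gf_q^*$,
\[
\eta_0(\Norm_{q/p}(x)) = \Norm_{q/p}(x)^{(p-1)/2} = x^{(p^m-1)/2} = \eta(x).
\]
Similarly, $\chi_1 = \chi_0 \circ \tr_{q/p}$ where $\chi_0$ is the canonical additive character of $\gf_p$. Thus $(\eta,\chi_1)$ is exactly the Hasse--Davenport lift of $(\eta_0,\chi_0)$, and the relation gives $G(\eta,\chi_1) = (-1)^{m-1}G(\eta_0,\chi_0)^m$, which I would invoke as a standard tool.

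Next I would settle the prime-field case by showing $G(\eta_0,\chi_0) = (\sqrt{-1})^{((p-1)/2)^2}\sqrt{p}$. Since $\eta_0$ is real-valued, $\overline{G(\eta_0,\chi_0)} = \eta_0(-1)\,G(\eta_0,\chi_0)$, and combining this with $|G(\eta_0,\chi_0)|^2 = p$ (which follows from orthogonality of additive characters on $\gf_p$) gives $G(\eta_0,\chi_0)^2 = \eta_0(-1)\,p = (-1)^{(p-1)/2}p$. Therefore $G(\eta_0,\chi_0) = \pm\sqrt p$ when $p \equiv 1 \pmod 4$ and $G(\eta_0,\chi_0) = \pm \sqrt{-1}\,\sqrt p$ when $p \equiv 3 \pmod 4$, which matches $(\sqrt{-1})^{((p-1)/2)^2}\sqrt p$ up to a single overall sign.

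The main obstacle, and the only genuinely nontrivial step, is Gauss's sign determination: showing this ambiguous sign is always $+$. I would present the finite and elementary argument due to Schur: rewrite $G(\eta_0,\chi_0) = \sum_{x=0}^{p-1}\zeta_p^{x^2}$ by pairing squares and using $\eta_0$ plus orthogonality, and then interpret this sum as the trace of the $p \times p$ discrete Fourier matrix $F$ with entries $\zeta_p^{jk}/\sqrt p$. Since $F^4 = I$, its eigenvalues lie in $\{\pm 1, \pm\sqrt{-1}\}$; a direct computation of $\tr F^2$ (equivalently, a counting argument in $\gf_p$) together with the known value of $\det F$ pins down the multiplicities of each eigenvalue, and hence fixes $\tr F$ and the sign of the Gauss sum.

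Finally, substituting the prime-field value back into the Hasse--Davenport identity yields
\[
G(\eta,\chi_1) = (-1)^{m-1}\bigl((\sqrt{-1})^{((p-1)/2)^2}\sqrt p\bigr)^m = (-1)^{m-1}(\sqrt{-1})^{((p-1)/2)^2 m}\sqrt q,
\]
which is the asserted formula. Everything apart from the sign determination is formal character theory; it is the Schur/Gauss eigenvalue analysis of $F$ together with the Hasse--Davenport lifting that together carry all the real content.
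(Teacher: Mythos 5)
The paper offers no proof of this lemma at all; it is quoted directly from Lidl--Niederreiter (Theorem 5.15), so the only meaningful comparison is with the proof in that cited source. Your outline is correct and follows essentially that same route: the Davenport--Hasse lifting relation (after checking $\eta=\eta_0\circ \mathrm{N}_{q/p}$ and $\chi_1=\chi_0\circ \mathrm{Tr}_{q/p}$) reduces $G(\eta,\chi_1)$ to the prime-field quadratic Gauss sum, whose modulus is elementary and whose sign is Gauss's classical determination, which you propose to settle via Schur's eigenvalue analysis of the discrete Fourier matrix. The only portion left as a sketch is exactly that sign determination (pinning down the eigenvalue multiplicities of $F$ from $\mathrm{Tr}(F^2)$, $\det F$ and $F^4=I$), which carries the real content; provided that classical computation is written out in full, the argument is complete and proves the stated formula.
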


The value of the Weil sum associated with quadratic polynomial is given by quadratic Gaussian sum as follows.

\begin{lemma}[\cite{L}, Theorem 5.33]\label{lem-weil}
Let $\chi$ be a nontrivial additive character of $\gf_q$ with $q$ an odd prime, and let $f(x)=a_2x^2+a_1x+a_0 \in \gf_q[x]$, where $a_2 \neq 0$. Then
\begin{eqnarray*}
\sum_{c \in \gf_q}\chi\left(f(c)\right) = \chi\left(a_0-a_1^2(4a_2)^{-1}\right)\eta(a_2)G(\eta, \chi).
\end{eqnarray*}
\end{lemma}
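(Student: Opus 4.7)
The plan is to reduce the sum to the evaluation of a standard quadratic Gauss sum by completing the square, and then to extract $\eta(a_2)G(\eta,\chi)$ by a change of variable that exploits the multiplicativity of the quadratic character.

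First I would complete the square. Since $a_2\neq 0$ and $p$ is odd, $(4a_2)^{-1}$ makes sense, and we may write
\begin{eqnarray*}
f(c) \;=\; a_2\!\left(c+\frac{a_1}{2a_2}\right)^{\!2} + \left(a_0-\frac{a_1^2}{4a_2}\right).
\end{eqnarray*}
Using that $\chi$ is an additive character, the constant term factors out of the sum, and a translation $y=c+a_1/(2a_2)$ (which is a bijection on $\gf_q$) gives
\begin{eqnarray*}
\sum_{c\in\gf_q}\chi(f(c))\;=\;\chi\!\left(a_0-\frac{a_1^2}{4a_2}\right)\sum_{y\in\gf_q}\chi(a_2 y^2).
\end{eqnarray*}
So the task reduces to showing $\sum_{y\in\gf_q}\chi(a_2 y^2)=\eta(a_2)G(\eta,\chi)$.

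Next I would evaluate the inner sum by counting preimages under the squaring map. For $z\in\gf_q^*$, the equation $y^2=z$ has $1+\eta(z)$ solutions (either $2$ if $z$ is a nonzero square, or $0$ otherwise), while $y=0$ contributes $\chi(0)=1$. Therefore
\begin{eqnarray*}
\sum_{y\in\gf_q}\chi(a_2 y^2) \;=\; 1+\sum_{z\in\gf_q^*}(1+\eta(z))\chi(a_2 z)\;=\;\sum_{z\in\gf_q}\chi(a_2 z)\;+\;\sum_{z\in\gf_q^*}\eta(z)\chi(a_2 z).
\end{eqnarray*}
Since $\chi$ is nontrivial and $a_2\neq 0$, the first sum vanishes by the orthogonality relation of additive characters recalled in the preliminaries. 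For the second sum I would substitute $w=a_2z$ (a bijection on $\gf_q^*$); because $\eta$ is a multiplicative character of order $2$, $\eta(w/a_2)=\eta(w)\eta(a_2)^{-1}=\eta(w)\eta(a_2)$, and hence
\begin{eqnarray*}
\sum_{z\in\gf_q^*}\eta(z)\chi(a_2 z)\;=\;\eta(a_2)\sum_{w\in\gf_q^*}\eta(w)\chi(w)\;=\;\eta(a_2)\,G(\eta,\chi).
\end{eqnarray*}
Combining the two reductions yields the claimed identity.

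The only genuinely delicate point is the step that turns the squaring count into the indicator $1+\eta(z)$; this is what forces the hypothesis that $p$ is odd (so that $\eta$ exists and takes values $\pm 1$, and $2$ is invertible for completing the square). Everything else is bookkeeping: orthogonality of additive characters to kill the ``trivial'' piece, and multiplicativity of $\eta$ to pull $\eta(a_2)$ out of the Gauss sum. I do not anticipate any serious obstacle; the main thing to be careful about is the handling of $y=0$ and the use of nontriviality of $\chi$, both of which are needed to get the clean formula without extra correction terms.
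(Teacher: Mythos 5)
Your proof is correct: completing the square, the translation $y=c+a_1/(2a_2)$, the count of square roots via $1+\eta(z)$, orthogonality of additive characters, and the multiplicativity of $\eta$ all work exactly as you describe, and together they give the stated identity for any $q$ of odd characteristic. The paper itself offers no proof of this lemma --- it is quoted verbatim from Lidl--Niederreiter (Theorem 5.33) --- and your argument is essentially the standard one given there, so there is nothing to reconcile beyond noting that the hypothesis ``$q$ an odd prime'' should really read ``$q$ a power of an odd prime,'' which your proof in fact covers.
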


The following lemma will be used later.
\begin{lemma}\label{lem-eta}\cite{D4}
Let $\eta$ and $\eta'$ be the quadratic characters of $\gf_q$ and $\gf_p$, respectively.
If $m \geq2$ is even, then $\eta(y)=1$ for each $y\in \gf_p^*$. If $m$ is odd, then $\eta(y)=\eta'(y)$ for each $y\in\gf_p^*$.
\end{lemma}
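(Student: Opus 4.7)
The plan is to apply Euler's criterion in both $\gf_q$ and $\gf_p$ and then compare the two exponentiations. Recall that for any $y \in \gf_q^*$, the quadratic character of $\gf_q$ satisfies $\eta(y) = y^{(q-1)/2}$ (an identity in $\gf_q$, with value in $\{1,-1\}$), and likewise $\eta'(y) = y^{(p-1)/2}$ for $y \in \gf_p^*$. Since $(p-1) \mid (q-1)$, I would factor the exponent as
\[
\frac{q-1}{2} \;=\; \frac{p-1}{2}\cdot\frac{q-1}{p-1},
\]
so that for $y \in \gf_p^*$ one obtains
\[
\eta(y) \;=\; \bigl(y^{(p-1)/2}\bigr)^{(q-1)/(p-1)} \;=\; \bigl(\eta'(y)\bigr)^{(q-1)/(p-1)}.
\]

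Next I would analyze the parity of $N := (q-1)/(p-1) = 1 + p + p^2 + \cdots + p^{m-1}$. Since $p$ is odd (the quadratic character is only considered when $q$ is an odd prime power), each summand $p^i$ is odd, hence $N \equiv m \pmod{2}$. Because $\eta'(y) \in \{1,-1\}$, the value of $\bigl(\eta'(y)\bigr)^N$ depends only on the parity of $N$. If $m$ is even, then $N$ is even and $\bigl(\eta'(y)\bigr)^N = 1$ for every $y \in \gf_p^*$, which is the first assertion. If $m$ is odd, then $N$ is odd and $\bigl(\eta'(y)\bigr)^N = \eta'(y)$, which is the second assertion.

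There is no real obstacle here: the statement is a folklore consequence of the compatibility between the quadratic character of a finite field and that of its prime subfield, and Euler's criterion handles both sides uniformly. The only point requiring a moment of care is the odd-characteristic hypothesis that justifies both the definition of $\eta'$ on $\gf_p$ and the claim $N \equiv m \pmod{2}$; apart from this, the proof is a one-line manipulation of exponents.
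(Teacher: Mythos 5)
Your proof is correct. The paper itself states this lemma without proof, citing it from \cite{D4}, so there is no in-paper argument to compare against; your Euler-criterion computation, writing $\eta(y)=y^{(q-1)/2}=\bigl(\eta'(y)\bigr)^{(q-1)/(p-1)}$ and noting that $(q-1)/(p-1)=1+p+\cdots+p^{m-1}\equiv m \pmod 2$ because $p$ is odd, is the standard way to establish it and settles both cases.
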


\section{Some lemmas}\label{sec3}
From now on, let $p$ be an odd prime.
Let $m,m_1,m_2$ be three positive integers such that $m_1\mid m$ and $m_2\mid m$.
Denote by $\chi_1$, $\chi_1'$ and $\chi_1''$ the canonical additive characters of $\gf_{p^m}$, $\gf_{p^{m_1}}$ and $\gf_{p^{m_2}}$, respectively. Let $\eta$, $\eta'$ and $\eta''$ respectively denote the quadratic multiplicative characters of $\gf_{p^m}$, $\gf_{p^{m_1}}$ and $\gf_{p^{m_2}}$.
In this section, we give some lemmas which will be used to prove our main theorems.

\begin{lemma}\label{lem-omega}
Denote by
\begin{eqnarray*}\label{eq3}
\Omega(b,c) = \sum_{z \in \gf_{p^{m_2}}^*}\chi_1''(zc)\sum_{y \in \gf_{p^{m_1}}^{*}}\chi_1(-z^2b^2y)\eta(y),
\end{eqnarray*}
where $b \in \gf_{p^{m}}$ and $c\in\gf_{p^{m_2}}$. The values of $\Omega(b,c)$ are respectively given for the case $m_2\mid m_1$ and the case $m_1\mid m_2$ in the following.
For $m_2\mid m_1$, if $\frac{m}{m_1}$ is odd, then
\begin{eqnarray*}
& &\Omega(b,c)\\
&=& \left\{\begin{array}{ll}
                   0 & \text{if $\tr_{p^m/p^{m_1}}(b^{2})=0$ and $c\in \gf_{p^{m_2}}$,}\\
                   -G(\eta',\chi_1')\eta'(-\tr_{p^m/p^{m_1}}(b^{2})) & \text{if $\tr_{p^m/p^{m_1}}(b^{2})\neq0$ and $c\neq0$,}\\
                   (p^{m_2}-1)G(\eta',\chi_1')\eta'(-\tr_{p^m/p^{m_1}}(b^{2}))& \text{if $\tr_{p^m/p^{m_1}}(b^{2})\neq0$ and $c=0$;}\\
                   \end{array}\right.
\end{eqnarray*}
if $\frac{m}{m_1}$ is even, then
\begin{eqnarray*}
\Omega(b,c)= \begin{cases}
             (p^{m_1}-1)(p^{m_2}-1) & \mbox{if $\tr_{p^m/p^{m_1}}(b^{2})=0$ and $c=0$,}\\
                   -(p^{m_1}-1) & \mbox{if $\tr_{p^m/p^{m_1}}(b^{2})=0$ and $c\neq0$,}\\
                   -(p^{m_2}-1) & \mbox{if $\tr_{p^m/p^{m_1}}(b^{2})\neq0$ and $c=0$,}\\
                   1 & \mbox{if $\tr_{p^m/p^{m_1}}(b^{2})\neq 0$ and $c\neq 0$.}
                  \end{cases}
\end{eqnarray*}
For $m_1\mid m_2$, if $\frac{m}{m_1}$ is odd and $\frac{m_2}{m_1}$ is odd, then
\begin{eqnarray*}
\Omega(b,c)= \left\{\begin{array}{ll}
              0  & \substack{\mbox{if } \tr_{p^m/p^{m_2}}(b^{2})=0,\\ \mbox{and }c\in \gf_{p^{m_2}}}\\
              (p^{m_1}-1)\eta''(-\tr_{p^m/p^{m_2}}(b^{2}))G(\eta'',\chi_1'') & \substack{\mbox{if } \tr_{p^m/p^{m_2}}(b^{2})\neq0, c=0 \mbox{ or }\\ \tr_{p^m/p^{m_2}}(b^{2})\neq0, c\neq0, \Delta=0,}\\
              -G(\eta'',\chi_1'')\eta''(-\tr_{p^m/p^{m_2}}(b^{2})) & \substack{\mbox{if }  \tr_{p^m/p^{m_2}}(b^{2})\neq0, c\neq0,\\ \mbox{and }\Delta \neq0;}\\
                  \end{array}\right.
\end{eqnarray*}
                  if $\frac{m}{m_1}$ is even and $\frac{m_2}{m_1}$ is odd, then
 \begin{eqnarray*}
 \Omega(b,c) = \left\{\begin{array}{ll}
                   (p^{m_1}-1)(p^{m_2}-1) & \text{if $\tr_{p^m/p^{m_2}}(b^{2})=0$, $c=0$,}\\
                   -(p^{m_1}-1) & \substack{\mbox{if } \tr_{p^m/p^{m_2}}(b^{2})=0, c\neq0 \mbox{ or }\\\tr_{p^m/p^{m_2}}(b^{2})\neq0 , c=0 \mbox{ or }\\\tr_{p^m/p^{m_2}}(b^{2})\neq0 , c \neq0,\vartriangle=0,}\\
                   \substack{\eta'\left(\Delta\right)\eta''\left(-\tr_{p^m/p^{m_2}}(b^{2})\right)\\ \times G(\eta'',\chi_1'')G(\eta',\chi_1')
                    -(p^{m_1}-1)} & \mbox{if } \tr_{p^m/p^{m_2}}(b^{2})\neq0,c\neq0,\vartriangle\neq0;\\
                   \end{array}\right.
 \end{eqnarray*}
                  if $\frac{m}{m_1}$ is even and $\frac{m_2}{m_1}$ is even, then
 \begin{eqnarray*}
 \Omega(b,c) = \left\{\begin{array}{ll}
                   (p^{m_1}-1)(p^{m_2}-1) & \text{if $\tr_{p^m/p^{m_2}}(b^{2})=0$, $c=0$,}\\
                   -(p^{m_1}-1) & \mbox{if }\tr_{p^m/p^{m_2}}(b^{2})=0, c\neq0,\\
                   \substack{(p^{m_1}-1)\left( -1+\eta''(-\tr_{p^m/p^{m_2}}(b^{2}))G(\eta'',\chi_1'')\right)} & \substack{\mbox{if } \tr_{p^m/p^{m_2}}(b^{2})\neq0, c=0 \mbox{ or }\\\tr_{p^m/p^{m_2}}(b^{2})\neq0 , c \neq0,\vartriangle=0,}\\
              \substack{-G(\eta'',\chi_1'') \eta''(-\tr_{p^m/p^{m_2}}(b^{2}))\\-(p^{m_1}-1)} & \substack{\mbox{if }  \tr_{p^m/p^{m_2}}(b^{2})\neq0, c\neq0,\\  \mbox{and }\Delta \neq0,}\\
                   \end{array}\right.
 \end{eqnarray*}
 where $\Delta=\tr_{p^{m_2}/p^{m_1}}\left(\frac{c^2}{\tr_{p^m/p^{m_2}}(b^{2})}\right)$.
 \end{lemma}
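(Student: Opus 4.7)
The plan is to evaluate $\Omega(b,c)$ by pushing the character $\chi_1$ of $\gf_{p^m}$ down to the canonical character of a suitable subfield via trace transitivity, and then applying the Weil sum formula (Lemma \ref{lem-weil}), the value of the quadratic Gauss sum (Lemma \ref{lem-2N}), and character orthogonality. A preparatory fact I would establish first is the extension of Lemma \ref{lem-eta} to an arbitrary subfield: for $y\in\gf_{p^{m_1}}^{*}$ one has $\eta(y)=\eta'(y)$ when $m/m_1$ is odd and $\eta(y)=1$ when $m/m_1$ is even. This follows because a generator $\alpha$ of $\gf_{p^m}^{*}$ yields the generator $\alpha^{N}$ of $\gf_{p^{m_1}}^{*}$ with $N=(p^m-1)/(p^{m_1}-1)\equiv m/m_1\pmod{2}$ for odd $p$. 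The same identity applies to $\eta''$ restricted to $\gf_{p^{m_1}}^{*}$, with $m/m_1$ replaced by $m_2/m_1$.

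For the case $m_2\mid m_1$, the inclusion $\gf_{p^{m_2}}\subseteq\gf_{p^{m_1}}$ gives $z^{2}y\in\gf_{p^{m_1}}$, so trace transitivity yields $\chi_1(-z^{2}b^{2}y)=\chi_1'(-z^{2}y\,\tr_{p^m/p^{m_1}}(b^{2}))$. The inner $y$-sum becomes $\sum_{y\in\gf_{p^{m_1}}^{*}}\chi_1'(-z^{2}\tr_{p^m/p^{m_1}}(b^{2})y)\eta'(y)$ when $m/m_1$ is odd, or the same sum without $\eta'(y)$ when $m/m_1$ is even. Each is a standard evaluation depending only on whether $\tr_{p^m/p^{m_1}}(b^{2})$ vanishes, yielding either $0$, $\eta'(-\tr_{p^m/p^{m_1}}(b^{2}))G(\eta',\chi_1')$, $p^{m_1}-1$, or $-1$. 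Combining with $\sum_{z\in\gf_{p^{m_2}}^{*}}\chi_1''(zc)\in\{p^{m_2}-1,-1\}$ then produces the four subcases listed for $m_2\mid m_1$.

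For the case $m_1\mid m_2$, since $y\in\gf_{p^{m_1}}\subseteq\gf_{p^{m_2}}$, trace transitivity instead gives $\chi_1(-z^{2}b^{2}y)=\chi_1''(-z^{2}yB)$ with $B:=\tr_{p^m/p^{m_2}}(b^{2})$. I would swap the order of summation to obtain
\begin{equation*}
\Omega(b,c)=\sum_{y\in\gf_{p^{m_1}}^{*}}\eta(y)\sum_{z\in\gf_{p^{m_2}}^{*}}\chi_1''(cz-yBz^{2}).
\end{equation*}
If $B=0$ the inner sum collapses to $\sum_{z\in\gf_{p^{m_2}}^{*}}\chi_1''(zc)$, after which the $y$-sum is handled by the restriction formula for $\eta$; this treats the $B=0$ subcases. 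If $B\neq 0$, Lemma \ref{lem-weil} applied to $f(z)=-yBz^{2}+cz$, minus the $z=0$ contribution, produces
\begin{equation*}
\sum_{z\in\gf_{p^{m_2}}^{*}}\chi_1''(cz-yBz^{2})=\chi_1''\!\left(\frac{c^{2}}{4yB}\right)\eta''(-yB)G(\eta'',\chi_1'')-1.
\end{equation*}
Substituting this back, factoring out $\eta''(-B)G(\eta'',\chi_1'')$, and using the restriction formulas to identify $\eta(y)\eta''(y)$ as either $1$ or $\eta'(y)$ (the combination $m/m_1$ odd and $m_2/m_1$ even being impossible), reduces the remaining $y$-sum to $\sum_{y\in\gf_{p^{m_1}}^{*}}\varepsilon(y)\chi_1''(c^{2}/(4yB))$ with $\varepsilon\in\{1,\eta'\}$. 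Applying $y\mapsto 1/y$ and a further trace-transitivity step converts $\chi_1''(c^{2}y/(4B))$ into $\chi_1'(y\,\Delta/4)$ where $\Delta=\tr_{p^{m_2}/p^{m_1}}(c^{2}/B)$, after which the $y$-sum is either a standard Gauss-type evaluation yielding $\eta'(\Delta)G(\eta',\chi_1')$ (using $\eta'(1/4)=1$) or an orthogonality relation, according to whether $\Delta\neq 0$ or $\Delta=0$.

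The main obstacle is the bookkeeping: the case $m_1\mid m_2$ involves three independent binary splits---parity of $m/m_1$, parity of $m_2/m_1$, and the vanishing of $B$, $c$ and $\Delta$---producing the ten displayed subcases. Once the restriction formulas for $\eta$ and $\eta''$ and the Weil step are in place, each subcase reduces mechanically either to character orthogonality or to a single Gauss sum over a subfield, and the formulas in the lemma statement are obtained by collecting the factors $\eta''(-B)$, $\eta'(\Delta)$, $G(\eta'',\chi_1'')$ and $G(\eta',\chi_1')$ according to the subcase.
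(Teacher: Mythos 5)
Your proposal is correct and follows essentially the same route as the paper: push $\chi_1$ down to the relevant subfield by trace transitivity, split on the vanishing of the relative trace of $b^2$ and of $c$, use the restriction of the quadratic character to subfields (the paper invokes Lemma \ref{lem-eta}, which you correctly extend to general subfield pairs, including the observation that $m/m_1$ odd with $m_2/m_1$ even cannot occur), apply Lemma \ref{lem-weil} to complete the square in the $z$-sum when $\tr_{p^m/p^{m_2}}(b^2)\neq 0$, and finish the $y$-sum via orthogonality or a subfield Gauss sum after a change of variable (your $y\mapsto 1/y$ together with pulling out $1/4$ plays the same role as the paper's substitution $y\mapsto y/4$), yielding exactly the stated case distinctions in terms of $\Delta$.
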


 \begin{proof}
  Firstly, let $m_2\mid m_1$. Since $\chi_1(x)=\chi_1'(\tr_{p^m/p^{m_1}}(x))$ for $x\in \gf_{p^m}$, we have
  \begin{eqnarray*}
  \Omega(b, c) &=& \sum_{z \in \gf_{p^{m_2}}^*}\chi_1''(zc)\sum_{y \in \gf_{p^{m_1}}^*}\chi_1(-z^{2}b^{2}y)\eta(y) \\
     &=& \sum_{z \in \gf_{p^{m_2}}^*}\chi_1''(zc)\sum_{y \in \gf_{p^{m_1}}^*}\chi_1'(-yz^{2}\tr_{p^m/p^{m_1}}(b^{2}))\eta(y).
 \end{eqnarray*}
  To calculate the value of $\Omega(b, c)$, we consider the following four cases:

  {Case 1.1:} Let $\tr_{p^m/p^{m_1}}(b^{2})=0$ and $c=0$. By Lemma \ref{lem-eta} and the orthogonal relation of multiplicative characters, we have
   \begin{eqnarray*}
    \Omega(b, c) &=& \sum_{z \in \gf_{p^{m_2}}^*}\sum_{y \in \gf_{p^{m_1}}^*}\eta(y)\\
    &=& \begin{cases}
          \sum_{z \in \gf_{p^{m_2}}^*}\sum_{y \in \gf_{p^{m_1}}^*}\eta'(y) & \mbox{if $\frac{m}{m_1}$ is odd} \\
          \sum_{z \in \gf_{p^{m_2}}^*}\sum_{y \in \gf_{p^{m_1}}^*}1 & \mbox{if $\frac{m}{m_1}$ is even}
        \end{cases}\\
    &=& \begin{cases}
          0 & \mbox{if $\frac{m}{m_1}$ is odd,} \\
         (p^{m_1}-1)(p^{m_2}-1) & \mbox{if $\frac{m}{m_2}$ is even}.
        \end{cases}
  \end{eqnarray*}

  {Case 1.2:} Let $\tr_{p^m/p^{m_1}}(b^{2})=0$ and $c\neq0$. By Lemma \ref{lem-eta} and the orthogonal relations of multiplicative and additive characters, we have
   \begin{eqnarray*}
   \Omega(b, c) &=&\sum_{z \in \gf_{p^{m_2}}^*}\chi_1''(zc)\sum_{y \in \gf_{p^{m_1}}^*}\eta(y) \\
    &=& \begin{cases}
           \sum_{z \in \gf_{p^{m_2}}^*}\chi_1''(zc)\sum_{y \in \gf_{p^{m_1}}^*}\eta'(y)& \mbox{if $\frac{m}{m_1}$ is odd} \\
          \sum_{z \in \gf_{p^{m_2}}^*}\chi_1''(zc)\sum_{y \in \gf_{p^{m_1}}^*}1 & \mbox{if $\frac{m}{m_1}$ is even}
        \end{cases}\\
    &=& \begin{cases}
           0 & \mbox{if $\frac{m}{m_1}$ is odd,} \\
           -(p^{m_1}-1) & \mbox{if $\frac{m}{m_1}$ is even}.
        \end{cases}\\
  \end{eqnarray*}

  {Case 1.3:} Let $\tr_{p^m/p^{m_1}}(b^{2}) \neq 0$ and $c=0$. By Lemma \ref{lem-eta} and the orthogonal relation of additive characters, we have
   \begin{eqnarray*}
   \Omega(b, c) &=&\sum_{z \in \gf_{p^{m_2}}^*}\sum_{y \in \gf_{p^{m_1}}^*}\chi_1'(-yz^{2}\tr_{p^m/p^{m_1}}(b^{2}))\eta(y)\\
    &=& \begin{cases}
          \sum_{z \in \gf_{p^{m_2}}^*}\sum_{y \in \gf_{p^{m_1}}^*}\chi_1'(-yz^{2}\tr_{p^m/p^{m_1}}(b^{2}))\eta'(y) & \mbox{if $\frac{m}{m_1}$ is odd} \\
           \sum_{z \in \gf_{p^{m_2}}^*}\sum_{y \in \gf_{p^{m_1}}^*}\chi_1'(-yz^{2}\tr_{p^m/p^{m_1}}(b^{2}))  & \mbox{if $\frac{m}{m_1}$ is even}
        \end{cases}\\
    &=&\left\{\begin{array}{ll}
         \substack{\sum_{z \in \gf_{p^{m_2}}^*}\sum_{y \in
           \gf_{p^{m_1}}^*}\chi_1'(-yz^{2}\tr_{p^m/p^{m_1}}(b^{2}))\\ \times \eta'(-yz^{2}\tr_{p^m/p^{m_1}}(b^{2}))\eta'(-z^{2}\tr_{p^m/p^{m_1}}(b^{2}))}
           & \text{if $\frac{m}{m_1}$ is odd} \\
           -\sum_{z \in \gf_{p^{m_2}}^*}1 & \text{if $\frac{m}{m_1}$ is even}
            \end{array}\right.\\
    &=&\left\{\begin{array}{ll}
         (p^{m_2}-1)G(\eta',\chi_1')\eta'(-\tr_{p^m/p^{m_1}}(b^{2}))
           & \text{if $\frac{m}{m_1}$ is odd,} \\
         -(p^{m_2}-1) & \text{if $\frac{m}{m_1}$ is even.}
          \end{array}\right.\\
   \end{eqnarray*}

   {Case 1.4:} Let $\tr_{p^m/p^{m_1}}(b^{2}) \neq 0$ and $c\neq0$.  By Lemma \ref{lem-eta} and the orthogonal relation of additive characters, we have
  \begin{eqnarray*}
  \Omega(b, c)&=& \left\{\begin{array}{ll}
         \sum_{z \in \gf_{p^{m_2}}^*}\chi_1''(zc)\sum_{y \in \gf_{p^{m_1}}^*}\chi_1'(-yz^{2}\tr_{p^m/p^{m_1}}(b^{2}))\eta'(y)
          & \text{if $\frac{m}{m_1}$ is odd} \\
         \sum_{z \in \gf_{p^{m_2}}^*}\chi_1''(zc)\sum_{y \in \gf_{p^{m_1}}^*}\chi_1'(-yz^{2}\tr_{p^m/p^{m_1}}(b^{2}))
           & \text{if $\frac{m}{m_1}$ is even} \\
          \end{array}\right.\\
    &=& \left\{\begin{array}{ll}
        \substack{\sum_{z \in \gf_{p^{m_2}}^*}\chi_1''(zc)\sum_{y \in
           \gf_{p^{m_1}}^*}\chi_1'(-yz^{2}\tr_{p^m/p^{m_1}}(b^{2}))\\ \times \eta'(-yz^{2}\tr_{p^m/p^{m_1}}(b^{2}))\eta'(-\tr_{p^m/p^{m_1}}(b^{2}))}
           & \text{if $\frac{m}{m_1}$ is odd} \\
         -\sum_{z \in \gf_{p^{m_2}}^*}\chi_1''(zc)
           & \text{if $\frac{m}{m_1}$ is even} \\
           \end{array}\right.\\
    &=&\left\{\begin{array}{ll}
         -G(\eta',\chi_1')\eta'(-\tr_{p^m/p^{m_1}}(b^{2}))
           & \text{if $\frac{m}{m_1}$ is odd,} \\
         1
           & \text{if $\frac{m}{m_1}$ is even.}
          \end{array}\right.\\
   \end{eqnarray*}

 Secondly, let $m_1\mid m_2$. Since $\chi_1(x)=\chi_1''(\tr_{p^m/p^{m_2}}(x))$ for $x\in \gf_{p^m}$, we have
 \begin{eqnarray*}
  \Omega(b, c) &=& \sum_{y \in \gf_{p^{m_1}}^*}\eta(y)\sum_{z \in \gf_{p^{m_2}}^*}\chi_1''(zc)\chi_1(-z^{2}b^{2}y) \\
     &=& \sum_{y \in \gf_{p^{m_1}}^*}\eta(y)\sum_{z \in \gf_{p^{m_2}}^*}\chi_1''(zc)\chi_1''(-z^{2}y\tr_{p^m/p^{m_2}}(b^{2})).
 \end{eqnarray*}
To calculate the value of $\Omega(b, c)$, we consider the following four cases:

  {Case 2.1:} Let $\tr_{p^m/p^{m_2}}(b^{2})=0$ and $c=0$. By Lemma \ref{lem-eta} and the orthogonal relation of multiplicative characters, we have
   \begin{eqnarray*}
   \Omega(b, c) &=& \sum_{y \in \gf_{p^{m_1}}^*}\eta(y)\sum_{z \in \gf_{p^{m_2}}^*}1\\
    &=& \begin{cases}
          \sum_{y \in \gf_{p^{m_1}}^*}\eta'(y)\sum_{z \in \gf_{p^{m_2}}^*}1 & \mbox{if $\frac{m}{m_1}$ is odd} \\
          \sum_{y \in \gf_{p^{m_1}}^*}1\sum_{z \in \gf_{p^{m_2}}^*}1 & \mbox{if $\frac{m}{m_1}$ is even}
        \end{cases}\\
    &=& \begin{cases}
          0 & \mbox{if $\frac{m}{m_1}$ is odd,} \\
         (p^{m_1}-1)(p^{m_2}-1) & \mbox{if $\frac{m}{m_1}$ is even}.
        \end{cases}
  \end{eqnarray*}

  {Case 2.2:} Let $\tr_{p^m/p^{m_2}}(b^{2})=0$ and $c\neq0$. By Lemma \ref{lem-eta} and the orthogonal relations of multiplicative and additive characters, we have
   \begin{eqnarray*}
   \Omega(b, c) &=& \sum_{y \in \gf_{p^{m_1}}^*}\eta(y)\sum_{z \in \gf_{p^{m_2}}^*}\chi_1''(zc)\\
    &=& \begin{cases}
           \sum_{y \in \gf_{p^{m_1}}^*}\eta'(y)\sum_{z \in \gf_{p^{m_2}}^*}\chi_1''(zc)& \mbox{if $\frac{m}{m_1}$ is odd} \\
         \sum_{y \in \gf_{p^{m_1}}^*}\sum_{z \in \gf_{p^{m_2}}^*}\chi_1''(zc) & \mbox{if $\frac{m}{m_1}$ is even}
        \end{cases}\\
    &=& \begin{cases}
           0 & \mbox{if $\frac{m}{m_1}$ is odd,} \\
           -(p^{m_1}-1) & \mbox{if $\frac{m}{m_1}$ is even}.
        \end{cases}\\
  \end{eqnarray*}

  {Case 2.3:} Let $\tr_{p^m/p^{m_2}}(b^{2}) \neq 0$ and $c=0$.  By Lemmas \ref{lem-eta} and \ref{lem-weil}, we have
   \begin{eqnarray*}
   & &\Omega(b, c) \\
   &=&\sum_{y \in \gf_{p^{m_1}}^*}\eta(y)\sum_{z \in \gf_{p^{m_2}}^*}\chi_1''(-z^{2}y\tr_{p^m/p^{m_2}}(b^{2}))\\
    &=& \begin{cases}
           \sum_{y \in \gf_{p^{m_1}}^*}\eta'(y)\sum_{z \in \gf_{p^{m_2}}^*}\chi_1''(-z^{2}y\tr_{p^m/p^{m_2}}(b^{2}))& \mbox{if $\frac{m}{m_1}$ is odd} \\
           \sum_{y \in \gf_{p^{m_1}}^*}\sum_{z \in \gf_{p^{m_2}}^*}\chi_1''(-z^{2}y\tr_{p^m/p^{m_2}}(b^{2}))& \mbox{if $\frac{m}{m_1}$ is even}
        \end{cases}\\
     &=& \begin{cases}
           \sum\limits_{y \in \gf_{p^{m_1}}^*}\eta'(y)\sum\limits_{z \in \gf_{p^{m_2}}}\chi_1''(-z^{2}y\tr_{p^m/p^{m_2}}(b^{2}))& \mbox{if $\frac{m}{m_1}$ is odd} \\
          -(p^{m_1}-1)+\sum\limits_{y \in \gf_{p^{m_1}}^*}\sum\limits_{z \in \gf_{p^{m_2}}}\chi_1''(-z^{2}y\tr_{p^m/p^{m_2}}(b^{2})) & \mbox{if $\frac{m}{m_1}$ is even}
        \end{cases}\\
      &=& \begin{cases}
           \sum\limits_{y \in \gf_{p^{m_1}}^*}\eta'(y)\eta''(-y\tr_{p^m/p^{m_2}}(b^{2}))G(\eta'',\chi_1'')& \mbox{if $\frac{m}{m_1}$ is odd} \\
          -(p^{m_1}-1)+\sum\limits_{y \in \gf_{p^{m_1}}^*}\eta''(-y\tr_{p^m/p^{m_2}}(b^{2}))G(\eta'',\chi_1'') & \mbox{if $\frac{m}{m_1}$ is even}
        \end{cases}\\
        &=& \begin{cases}
           \sum\limits_{y \in \gf_{p^{m_1}}^*}\eta''(-\tr_{p^m/p^{m_2}}(b^{2}))G(\eta'',\chi_1'') & \substack{\mbox{if }\frac{m}{m_1}\mbox{ is odd and}\\ \frac{m_2}{m_1}\mbox{ is odd}}\\
          -(p^{m_1}-1)+\sum\limits_{y \in \gf_{p^{m_1}}^*}\eta'(y)\eta''(-\tr_{p^m/p^{m_2}}(b^{2}))G(\eta'',\chi_1'') & \substack{\mbox{if }\frac{m}{m_1}\mbox{ is even and}\\ \frac{m_2}{m_1}\mbox{ is odd}}\\
                    -(p^{m_1}-1)+\sum\limits_{y \in \gf_{p^{m_1}}^*}\eta''(-\tr_{p^m/p^{m_2}}(b^{2}))G(\eta'',\chi_1'') & \substack{\mbox{if }\frac{m}{m_1}\mbox{ is even and}\\ \frac{m_2}{m_1}\mbox{ is even}}
        \end{cases}\\
    &=& \begin{cases}
           (p^{m_1}-1)\eta''(-\tr_{p^m/p^{m_2}}(b^{2}))G(\eta'',\chi_1'') & \substack{\mbox{if }\frac{m}{m_1}\mbox{ is odd and}\\ \frac{m_2}{m_1}\mbox{ is odd}}\\
          -(p^{m_1}-1) & \substack{\mbox{if }\frac{m}{m_1}\mbox{ is even and}\\ \frac{m_2}{m_1}\mbox{ is odd}}\\
                    -(p^{m_1}-1)+(p^{m_1}-1)\eta''(-\tr_{p^m/p^{m_2}}(b^{2}))G(\eta'',\chi_1'') & \substack{\mbox{if }\frac{m}{m_1}\mbox{ is even and}\\ \frac{m_2}{m_1}\mbox{ is even}}
        \end{cases}\\
         &=& \begin{cases}
           (p^{m_1}-1)\eta''(-\tr_{p^m/p^{m_2}}(b^{2}))G(\eta'',\chi_1'') & \substack{\mbox{if }\frac{m}{m_1}\mbox{ is odd and}\\ \frac{m_2}{m_1}\mbox{ is odd},}\\
          -(p^{m_1}-1) & \substack{\mbox{if }\frac{m}{m_1}\mbox{ is even and}\\ \frac{m_2}{m_1}\mbox{ is odd},}\\
                   (p^{m_1}-1)\left( -1+\eta''(-\tr_{p^m/p^{m_2}}(b^{2}))G(\eta'',\chi_1'')\right) & \substack{\mbox{if }\frac{m}{m_1}\mbox{ is even and}\\ \frac{m_2}{m_1}\mbox{ is even}.}
        \end{cases}\\
   \end{eqnarray*}

   {Case 2.4:} Let $\tr_{p^m/p^{m_2}}(b^{2}) \neq 0$ and $c\neq0$. By Lemma \ref{lem-weil}, we have
   \begin{eqnarray*}
   & &\Omega(b, c) \\
   &=&\sum_{y \in \gf_{p^{m_1}}^*}\eta(y)\sum_{z \in \gf_{p^{m_2}}^*}\chi_1''(zc-z^{2}y\tr_{p^m/p^{m_2}}(b^{2}))\\
   &=&\sum_{y \in \gf_{p^{m_1}}^*}\eta(y)\left(\sum_{z \in \gf_{p^{m_2}}}\chi_1''(zc-z^{2}y\tr_{p^m/p^{m_2}}(b^{2}))-1\right)\\
   &=&\sum_{y \in \gf_{p^{m_1}}^*}\eta(y)\left(\chi_1''\left(\frac{c^2}{4y\tr_{p^m/p^{m_2}}(b^{2})}\right)\eta''(-y\tr_{p^m/p^{m_2}}(b^{2}))G(\eta'',\chi_1'')-1\right)\\
   &=&\left\{\begin{array}{ll}
         G(\eta'',\chi_1'')\sum\limits_{y \in \gf_{p^{m_1}}^*}\eta'(y)\chi_1''\left(\frac{c^2}{4y\tr_{p^m/p^{m_2}}(b^{2})}\right)\eta''(-y\tr_{p^m/p^{m_2}}(b^{2})) & \text{if $\frac{m}{m_1}$ is odd} \\
        \substack{G(\eta'',\chi_1'')\sum\limits_{y \in \gf_{p^{m_1}}^*}\chi_1''\left(\frac{c^2}{4y\tr_{p^m/p^{m_2}}(b^{2})}\right)\eta''(-y\tr_{p^m/p^{m_2}}(b^{2}))\\-(p^{m_1}-1)} & \text{if $\frac{m}{m_1}$ is even} \\
          \end{array}\right.\\
   &=&\left\{\begin{array}{ll}
         G(\eta'',\chi_1'')\sum\limits_{y \in \gf_{p^{m_1}}^*}\chi_1''\left(\frac{yc^2}{\tr_{p^m/p^{m_2}}(b^{2})}\right)\eta''(-\tr_{p^m/p^{m_2}}(b^{2})) & \substack{\mbox{if }\frac{m}{m_1}\mbox{ is odd and}\\ \frac{m_2}{m_1}\mbox{ is odd}}\\
        \substack{G(\eta'',\chi_1'')\sum\limits_{y \in \gf_{p^{m_1}}^*}\chi_1''\left(\frac{yc^2}{\tr_{p^m/p^{m_2}}(b^{2})}\right)\eta'(y)\eta''(-\tr_{p^m/p^{m_2}}(b^{2}))\\-(p^{m_1}-1)} & \substack{\mbox{if }\frac{m}{m_1}\mbox{ is even and }\\\frac{m_2}{m_1}\mbox{ is odd}} \\
        \substack{G(\eta'',\chi_1'')\sum\limits_{y \in \gf_{p^{m_1}}^*}\chi_1''\left(\frac{yc^2}{\tr_{p^m/p^{m_2}}(b^{2})}\right)\eta''(-\tr_{p^m/p^{m_2}}(b^{2}))\\-(p^{m_1}-1)} & \substack{\mbox{if }\frac{m}{m_1}\mbox{ is even and}\\\frac{m_2}{m_1}\mbox{ is even}} \\
          \end{array}\right.\\
   &=&\left\{\begin{array}{ll}
         \substack{G(\eta'',\chi_1'')\sum\limits_{y \in \gf_{p^{m_1}}^*}\chi_1'\left(y\tr_{p^{m_2}/p^{m_1}}\left(\frac{c^2}{\tr_{p^m/p^{m_2}}(b^{2})}\right)\right)\\ \times \eta''(-\tr_{p^m/p^{m_2}}(b^{2}))} & \substack{\mbox{if }\frac{m}{m_1}\mbox{ is odd and}\\ \frac{m_2}{m_1}\mbox{ is odd,}}\\
        \substack{G(\eta'',\chi_1'')\sum\limits_{y \in \gf_{p^{m_1}}^*}\chi_1'\left(y\tr_{p^{m_2}/p^{m_1}}\left(\frac{c^2}{\tr_{p^m/p^{m_2}}(b^{2})}\right)\right)\\ \times\eta'(y)\eta''(-\tr_{p^m/p^{m_2}}(b^{2}))-(p^{m_1}-1)} & \substack{\mbox{if }\frac{m}{m_1}\mbox{ is even and }\\\frac{m_2}{m_1}\mbox{ is odd},} \\
        \substack{G(\eta'',\chi_1'')\sum\limits_{y \in \gf_{p^{m_1}}^*}\chi_1'\left(y\tr_{p^{m_2}/p^{m_1}}\left(\frac{c^2}{\tr_{p^m/p^{m_2}}(b^{2})}\right)\right)\\ \times \eta''(-\tr_{p^m/p^{m_2}}(b^{2}))-(p^{m_1}-1)} & \substack{\mbox{if }\frac{m}{m_1}\mbox{ is even and}\\\frac{m_2}{m_1}\mbox{ is even},} \\
          \end{array}\right.\\
   \end{eqnarray*}
   where we used the substitution $y\mapsto \frac{y}{4}$ in the fifth equality. If $\tr_{p^{m_2}/p^{m_1}}\left(\frac{c^2}{\tr_{p^m/p^{m_2}}(b^{2})}\right)=0$, then
   \begin{eqnarray*}
   \Omega(b,c)&=&\left\{\begin{array}{ll}
         (p^{m_1}-1)G(\eta'',\chi_1'') \eta''(-\tr_{p^m/p^{m_2}}(b^{2})) & \substack{\mbox{if }\frac{m}{m_1}\mbox{ is odd and}\\ \frac{m_2}{m_1}\mbox{ is odd},}\\
        -(p^{m_1}-1) & \substack{\mbox{if }\frac{m}{m_1}\mbox{ is even and }\\\frac{m_2}{m_1}\mbox{ is odd},} \\
       (p^{m_1}-1) G(\eta'',\chi_1'') \eta''(-\tr_{p^m/p^{m_2}}(b^{2}))-(p^{m_1}-1) & \substack{\mbox{if }\frac{m}{m_1}\mbox{ is even and}\\\frac{m_2}{m_1}\mbox{ is even}.} \\
          \end{array}\right.
   \end{eqnarray*}
   If $\tr_{p^{m_2}/p^{m_1}}\left(\frac{c^2}{\tr_{p^m/p^{m_2}}(b^{2})}\right)\neq 0$, then
   \begin{eqnarray*}
  \Omega(b,c) &=&\left\{\begin{array}{ll}
         -G(\eta'',\chi_1'') \eta''(-\tr_{p^m/p^{m_2}}(b^{2})) & \substack{\mbox{if }\frac{m}{m_1}\mbox{ is odd and}\\ \frac{m_2}{m_1}\mbox{ is odd},}\\
        \substack{G(\eta'',\chi_1'')G(\eta',\chi_1')\eta'\left(\tr_{p^{m_2}/p^{m_1}}\left(\frac{c^2}{\tr_{p^m/p^{m_2}}(b^{2})}\right)\right)\\
        \times \eta''(-\tr_{p^m/p^{m_2}}(b^{2}))-(p^{m_1}-1)} & \substack{\mbox{if }\frac{m}{m_1}\mbox{ is even and }\\\frac{m_2}{m_1}\mbox{ is odd},} \\
        -G(\eta'',\chi_1'') \eta''(-\tr_{p^m/p^{m_2}}(b^{2}))-(p^{m_1}-1) & \substack{\mbox{if }\frac{m}{m_1}\mbox{ is even and}\\\frac{m_2}{m_1}\mbox{ is even}.} \\
          \end{array}\right.
   \end{eqnarray*}
   Combining all the cases above, the desired conclusions directly follow.
   \end{proof}

\begin{lemma}\label{lem-N}
Let $p$ be an odd prime. Let $m, m_1, m_2$ be three positive integers with $m_1 \mid m$ and $m_2 \mid m$. Denote by $N_i=\lvert \{b\in \gf_{p^m}^*: \tr_{p^m/p^{m_i}}(b^2)=a\} \rvert$, where $a \in \gf_{p^{m_i}}$ for $i=1,2$. If $\frac{m}{m_i}$ is odd,  then
\begin{eqnarray*}
  N_1 = \begin{cases}
        p^{m-m_1}-1 & \mbox{if $a=0$,} \\
        p^{m-m_1}+\frac{G(\eta,\chi_1)G(\eta',\chi_1')\eta'(-a)}{p^{m_1}}  & \mbox{if $a \in \gf_{p^{m_1}}^*$},
      \end{cases}
\end{eqnarray*}
and
\begin{eqnarray*}
  N_2 = \begin{cases}
        p^{m-m_2}-1 & \mbox{if $a=0$,} \\
        p^{m-m_2}+\frac{G(\eta,\chi_1)G(\eta'',\chi_1'')\eta''(-a)}{p^{m_2}}  & \mbox{if $a \in \gf_{p^{m_2}}^*$}.
      \end{cases}
\end{eqnarray*}
If $\frac{m}{m_i}$ is even for $i=1,2$, then
\begin{eqnarray*}
  N_i = \begin{cases}
          p^{m-m_i}+\frac{(p^{m_i}-1)G(\eta,\chi_1)}{p^{m_i}}-1 & \mbox{if $a=0$,} \\
          p^{m-m_i}-\frac{G(\eta,\chi_1)}{p^{m_i}} & \mbox{if $a \in \gf_{p^{m_i}}^*$.}
        \end{cases}
\end{eqnarray*}
\end{lemma}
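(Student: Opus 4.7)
The plan is to express $N_i$ as a two-fold character sum and reduce it to a quadratic Gaussian sum by way of Lemma \ref{lem-weil}. I will carry out the argument for $i=1$; the case $i=2$ is entirely analogous under the substitution $(\chi_1', \eta', m_1) \mapsto (\chi_1'', \eta'', m_2)$.

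First, I would apply the orthogonality of the additive characters of $\gf_{p^{m_1}}$ to write
$$
N_1 \;=\; \frac{1}{p^{m_1}} \sum_{b \in \gf_{p^m}^*} \sum_{z \in \gf_{p^{m_1}}} \chi_1'\bigl(z(\tr_{p^m/p^{m_1}}(b^2) - a)\bigr),
$$
split off the $z=0$ contribution $(p^m-1)/p^{m_1}$, and use transitivity of the trace, $\chi_1'(z\tr_{p^m/p^{m_1}}(b^2)) = \chi_1(zb^2)$, to reduce the inner sum over $b$ to $\sum_{b \in \gf_{p^m}} \chi_1(zb^2) - 1 = \eta(z) G(\eta, \chi_1) - 1$ via Lemma \ref{lem-weil} applied to $f(b) = zb^2$. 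This yields
$$
N_1 \;=\; \frac{p^m - 1}{p^{m_1}} \,-\, \frac{S_0}{p^{m_1}} \,+\, \frac{G(\eta, \chi_1)}{p^{m_1}}\, S_1,
$$
where $S_0 = \sum_{z \in \gf_{p^{m_1}}^*} \chi_1'(-za)$ and $S_1 = \sum_{z \in \gf_{p^{m_1}}^*} \chi_1'(-za)\,\eta(z)$.

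The main step is the evaluation of $S_1$, and here Lemma \ref{lem-eta} is decisive. When $\frac{m}{m_1}$ is odd, $\eta(z) = \eta'(z)$ for every $z \in \gf_{p^{m_1}}^*$; for $a \neq 0$ the substitution $z \mapsto -z/a$ converts $S_1$ into $\eta'(-a)\,G(\eta', \chi_1')$, while for $a = 0$ it collapses to $\sum_z \eta'(z) = 0$. When $\frac{m}{m_1}$ is even, $\eta(z) = 1$ on $\gf_{p^{m_1}}^*$, so $S_1$ degenerates to $S_0$, which equals $p^{m_1}-1$ for $a=0$ and $-1$ for $a \neq 0$ by additive orthogonality. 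Substituting these evaluations into the formula for $N_1$ and simplifying reproduces each of the four closed-form expressions in the statement. The only real obstacle is the careful bookkeeping of signs and Gaussian-sum factors across the four $(a,\,\mbox{parity of }m/m_i)$ cases; no input beyond Lemmas \ref{lem-weil} and \ref{lem-eta} is required.
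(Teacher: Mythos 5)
Your proposal is correct and follows essentially the same route as the paper's proof: both expand $N_1$ via additive-character orthogonality over $\gf_{p^{m_1}}$, complete the sum over $b$ to all of $\gf_{p^m}$ so that Lemma \ref{lem-weil} gives the factor $\eta(y)G(\eta,\chi_1)$, and then use Lemma \ref{lem-eta} to split into the four $(a,\, m/m_1\text{-parity})$ cases, with the substitution argument yielding $\eta'(-a)G(\eta',\chi_1')$ exactly as in the paper. Your $S_0$/$S_1$ bookkeeping is just a repackaging of the same terms the paper writes out explicitly.
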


\begin{proof}
 We only prove the case that $i=1$. For the case that $i=2$, we can similarly give the proof.
  By the orthogonal relation of additive characters and Lemma \ref{lem-weil}, we have
    \begin{eqnarray*}
  N_1 &=& \frac{1}{p^{m_1}}\sum_{b \in \gf_{p^{m}}^*}\sum_{y \in \gf_{p^{m_1}}}\zeta_p^{\tr_{p^{m_1}/p}(y(\tr_{p^m/p^{m_1}}(b^{2})-a))}\\
    &=& \frac{p^m-1}{p^{m_1}}+\frac{1}{p^{m_1}}\sum_{b \in \gf_{p^{m}}^*}\sum_{y \in \gf_{p^{m_1}}^*}\chi_1(yb^{2})\chi_1'(-ya)\\
    &=&\frac{p^m-1}{p^{m_1}}-\frac{1}{p^{m_1}}\sum_{y \in \gf_{p^{m_1}}^*}\chi_1'(-ya)+\frac{1}{p^{m_1}}\sum_{y \in \gf_{p^{m_1}}^*}\chi_1'(-ya)\sum_{b \in \gf_{p^m}}\chi_1(yb^{2})\\
    &=& \begin{cases}
         \frac{p^{m}-p^{m_1}}{p^{m_1}}+\frac{G(\eta,\chi_1)}{p^{m_1}}\sum_{y \in \gf_{p^{m_1}}^*}\eta(y) & \mbox{if $a=0$} \\
          p^{m-m_1}+\frac{G(\eta,\chi_1)}{p^{m_1}}\sum_{y \in \gf_{p^{m_1}}^*}\chi_1'(-ya)\eta(y) & \mbox{if $a \neq 0$}
        \end{cases}\\
    &=& \begin{cases}
          \frac{p^{m}-p^{m_1}}{p^{m_1}}+\frac{G(\eta,\chi_1)}{p^{m_1}}\sum_{y \in \gf_{p^{m_1}}^*}\eta'(y)& \mbox{if $a=0$ and $\frac{m}{m_1}$ is odd} \\
          \frac{p^{m}-p^{m_1}}{p^{m_1}}+\frac{G(\eta,\chi_1)}{p^{m_1}}\sum_{y \in \gf_{p^{m_1}}^*}1 & \mbox{if $a=0$ and $\frac{m}{m_1}$ is even} \\
           p^{m-m_1}+\frac{G(\eta,\chi_1)}{p^{m_1}}\sum_{y \in \gf_{p^{m_1}}^*}\chi_1'(-ya)\eta'(-ya)\eta'(-a) & \mbox{if $a \neq 0$ and $\frac{m}{m_1}$ is odd}\\
           p^{m-m_1}+\frac{G(\eta,\chi_1)}{p^{m_1}}\sum_{y \in \gf_{p^{m_1}}^*}\chi_1'(-ya) & \mbox{if $a \neq 0$ and $\frac{m}{m_1}$ is even}
        \end{cases}\\
    &=&\begin{cases}
          \frac{p^{m}-p^{m_1}}{p^{m_1}}& \mbox{if $a=0$ and $\frac{m}{m_1}$ is odd,} \\
          \frac{p^{m}-p^{m_1}}{p^{m_1}}+\frac{(p^{m_1}-1)G(\eta,\chi_1)}{p^{m_1}} & \mbox{if $a=0$ and $\frac{m}{m_1}$ is even,} \\
           p^{m-m_1}+\frac{G(\eta,\chi_1)G(\eta',\chi_1')\eta'(-a) }{p^{m_1}} & \mbox{if $a \neq 0$ and $\frac{m}{m_1}$ is odd,}\\
           p^{m-m_1}-\frac{G(\eta,\chi_1)}{p^{m_1}} & \mbox{if $a \neq 0$ and $\frac{m}{m_1}$ is even.}
        \end{cases}\\
  \end{eqnarray*}
Then the desired conclusions follow.
\end{proof}\

\begin{lemma}\label{lem-f}
Let $p$ be an odd prime. Let $m, m_1, m_2$ be three positive integers with $m_1 \mid m_2\mid m$. Denote by $M=\lvert  \{c \in \gf_{p^{m_2}}^* : {\tr_{p^{m_2}/p^{m_1}}(\frac{c^2}{a})}=t\} \lvert$, where $a \in \gf_{p^{m_2}}^*$, $t \in \gf_{p^{m_1}}$. When $\frac{m_2}{m_1}$ is odd,
\begin{eqnarray*}
 M = \begin{cases}
        p^{m_2-m_1}-1 & \mbox{if $t=0$,} \\
        p^{m_2-m_1}+\frac{G(\eta'',\chi_1'')G(\eta',\chi_1')\eta''(-at)}{p^{m_1}} & \mbox{if $t \in \gf_{p^{m_1}}^*$}.
      \end{cases}
\end{eqnarray*}
When $\frac{m_2}{m_1}$ is even,
\begin{eqnarray*}
  M = \begin{cases}
          p^{m_2-m_1}+\frac{(p^{m_1}-1)G(\eta'', \chi_1'')\eta''(a)}{p^{m_1}}-1 & \mbox{if $t=0$,} \\
          p^{m_2-m_1}-\frac{G(\eta'', \chi_1'')\eta''(a)}{p^{m_1}} & \mbox{if $t \in \gf_{p^{m_1}}^*$.}
        \end{cases}
\end{eqnarray*}
\end{lemma}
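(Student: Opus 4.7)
The plan is to follow the same character-sum template used in the proof of Lemma \ref{lem-N}. I would start by using the orthogonality of additive characters to detect the trace condition, writing
\[ M = \frac{1}{p^{m_1}}\sum_{c\in\gf_{p^{m_2}}^*}\sum_{y\in\gf_{p^{m_1}}}\chi_1'\bigl(y\bigl(\tr_{p^{m_2}/p^{m_1}}(c^2/a)-t\bigr)\bigr). \]
Pulling out the $y=0$ contribution gives a main term $(p^{m_2}-1)/p^{m_1}$. For $y\neq 0$, the transitivity of trace identifies $\chi_1'(y\tr_{p^{m_2}/p^{m_1}}(c^2/a))$ with $\chi_1''(yc^2/a)$, and restoring the $c=0$ summand lets Lemma \ref{lem-weil} evaluate the inner sum on the full field as $\eta''(y/a)G(\eta'',\chi_1'')$. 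Since $\eta''$ has order $2$, we have $\eta''(y/a)=\eta''(a)\eta''(y)$.

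Next I would invoke Lemma \ref{lem-eta} to reduce $\eta''(y)$ for $y\in\gf_{p^{m_1}}^*$: it equals $\eta'(y)$ when $m_2/m_1$ is odd and equals $1$ when $m_2/m_1$ is even. Splitting further on whether $t=0$ or $t\neq 0$ yields four subcases. In each, the remaining outer sum over $y$ is handled either by the orthogonal relation of additive or multiplicative characters (producing $p^{m_1}-1$, $-1$, or $0$) or by reducing to a quadratic Gaussian sum on $\gf_{p^{m_1}}$: after the substitution $u=-yt$ (valid when $t\neq 0$) one obtains $\sum_{y\in\gf_{p^{m_1}}^*}\chi_1'(-yt)\eta'(y)=\eta'(-t)G(\eta',\chi_1')$.

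The last step is to assemble the pieces. In the odd-$m_2/m_1$, $t\neq 0$ branch one obtains $p^{m_2-m_1}+\tfrac{G(\eta'',\chi_1'')G(\eta',\chi_1')\eta''(a)\eta'(-t)}{p^{m_1}}$, which matches the claimed expression after applying the identity $\eta''(a)\eta'(-t)=\eta''(-at)$; this rests on Lemma \ref{lem-eta} once more, since when $m_2/m_1$ is odd the character $\eta''$ restricted to $\gf_{p^{m_1}}$ coincides with $\eta'$, so $\eta''(-t)=\eta'(-t)$. The remaining three subcases (odd parity with $t=0$, and the two even-parity branches) require no such reconciliation and fall out by direct substitution together with $\eta''(a)\eta''(a)=1$.

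The main obstacle is bookkeeping rather than any new idea: one must track four combinations of (parity of $m_2/m_1$, vanishing of $t$), manage the signs produced by the quadratic characters $\eta'$ and $\eta''$, and keep the two Gaussian sums $G(\eta',\chi_1')$ and $G(\eta'',\chi_1'')$ properly apportioned between the two fields. All necessary tools have already been exercised in Lemmas \ref{lem-omega} and \ref{lem-N}, so no additional machinery is needed; in fact the computation is strictly simpler than that of Lemma \ref{lem-omega} because there is only one extension $\gf_{p^{m_2}}/\gf_{p^{m_1}}$ to contend with.
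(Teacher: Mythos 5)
Your proposal is correct and follows essentially the same route as the paper: orthogonality of additive characters, Lemma \ref{lem-weil} applied to the complete-field quadratic sum, the parity reduction of $\eta''$ on $\gf_{p^{m_1}}^*$ via (the relative form of) Lemma \ref{lem-eta}, and the same four-case split with the same Gaussian-sum bookkeeping. The only cosmetic difference is that you add and remove the $c=0$ term inside the computation, whereas the paper first evaluates the count $M'$ over all of $\gf_{p^{m_2}}$ and then subtracts $1$ when $t=0$; the two are equivalent.
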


\begin{proof}
Let $M':=\lvert \{c \in \gf_{p^{m_2}} : {\tr_{p^{m_2}/p^{m_1}}(\frac{c^2}{a})}=t\} \lvert$.
 By the orthogonal relation of additive characters and Lemma \ref{lem-weil}, we have
  \begin{eqnarray*}
   M' &=& \frac{1}{p^{m_1}}\sum_{c \in \gf_{p^{m_2}}}\sum_{y \in
         \gf_{p^{m_1}}}\zeta_p^{\tr_{p^{m_1}}/p(y(\tr_{p^{m_2}/p^{m_1}}(\frac{c^2}{a})-t))}\\
     &=& \frac{1}{p^{m_1}}\sum_{c \in \gf_{p^{m_2}}}\sum_{y \in \gf_{p^{m_1}}}\chi_1''\left(\frac{yc^2}{a}\right)\chi_1'(-yt)\\
     &=&  p^{m_2-m_1}+\frac{1}{p^{m_1}}\sum_{y \in \gf_{p^{m_1}}^*}\chi_1'(-yt)\sum_{c \in \gf_{p^{m_2}}}\chi_1''\left(\frac{yc^2}{a}\right)\\
     &=& p^{m_2-m_1}+\frac{G(\eta'',\chi_1'')}{p^{m_1}}\sum_{y \in \gf_{p^{m_1}}^*}\chi_1'(-yt)\eta''\left(\frac{y}{a}\right)\\
    &=& \begin{cases}
          p^{m_2-m_1}+\frac{G(\eta'',\chi_1'')}{p^{m_1}}\sum_{y \in \gf_{p^{m_1}}^*}\eta'(y)\eta''(a)& \substack{\mbox{if }t=0\mbox{ and}\\\frac{m_2}{m_1}\mbox{ is odd}}\\
          p^{m_2-m_1}+\frac{G(\eta'',\chi_1'')}{p^{m_1}}\sum_{y \in \gf_{p^{m_1}}^*}\eta''(a) & \substack{\mbox{if }t=0\mbox{ and}\\\frac{m_2}{m_1}\mbox{ is even}}\\
          p^{m_2-m_1}+\frac{G(\eta'',\chi_1'')}{p^{m_1}}\sum_{y \in \gf_{p^{m_1}}^*}\chi_1'(-yt)\eta'(-yt)\eta''(-at) & \substack{\mbox{if }t \neq 0\mbox{ and}\\ \frac{m_2}{m_1}\mbox{ is odd}}\\
          p^{m_2-m_1}+\frac{G(\eta'',\chi_1'')}{p^{m_1}}\sum_{y \in \gf_{p^{m_1}}^*}\chi_1'(-yt)\eta''(a) & \substack{\mbox{if }t \neq 0\mbox{ and}\\\frac{m_2}{m_1}\mbox{ is even}}
        \end{cases}\\
    &=& \begin{cases}
          p^{m_2-m_1} & \mbox{if $t=0$ and $\frac{m_2}{m_1}$ is odd,} \\
          p^{m_2-m_1}+\frac{(p^{m_1}-1)G(\eta'', \chi_1'')\eta''(a)}{p^{m_1}} & \mbox{if $t=0$ and $\frac{m_2}{m_1}$ is even,} \\
          p^{m_2-m_1}+\frac{G(\eta'', \chi_1'')G(\eta',\chi_1')\eta''(-at)}{p^{m_1}}& \mbox{if $t \neq 0$ and $\frac{m_2}{m_1}$ is odd,}\\
          p^{m_2-m_1}-\frac{G(\eta'', \chi_1'')\eta''(a)}{p^{m_1}}& \mbox{if $t \neq 0$ and $\frac{m_2}{m_1}$ is even.}\\
        \end{cases}
  \end{eqnarray*}
  Obviously, if $t=0$, then $M=M'-1$; if $t\neq0$, then $M=M'$. Then the desired conclusions follow.
\end{proof}
\section{The weight distribution and self-orthogonality of $\overline{\cC_D}$}\label{sec4}
In this section, we determine the parameters and weight distributions of  $\overline{\cC_D}$ in Equation (\ref{CD}) in two cases.
In both cases, $\overline{\cC_D}$ is proved to be self-orthogonal.
\subsection{The case that $m_2 \mid m_1\mid m$}\label{sec4.1}
In this subsection, we study $\overline{\cC_D}$ for the case that $m_2 \mid m_1\mid m$.
\begin{theorem}\label{th-4.1}
  Let $p$ be an odd prime. Let $m, m_1, m_2$ be three positive integers such that $m_2 \mid m_1\mid m$. Denote by $l_1=({\frac{p-1}{2}})^2\frac{m+m_1}{2}+{m_1}+m-2+\frac{p^{m_1}-1}{2}$ and $l_2=({\frac{p-1}{2}})^2\frac{m}{2}+m-1$. Then we have the following results.
  \begin{itemize}
  \item Let $\frac{m}{m_1}$ be odd. If $\frac{m}{m_1}>1$, then $\overline{\cC_D}$ is an $[p^{m-m_1}, \frac{m}{m_2}+1]$ code over $\gf_{p^{m_2}}$ with weight distribution in Table \ref{tab4.1}. Particularly, if $m>m_1+2m_2$, then $\overline{\cC_D}$ is self-orthogonal.
  \item Let $\frac{m}{m_1}$ be even. If $\frac{m}{m_1}>2$, then $\overline{\cC_D}$ is an $[p^{m-m_1}+\frac{(p^{m_1}-1)(-1)^{l_2}\sqrt[]{p^m}}{p^{m_1}}, \frac{m}{m_2}+1]$ code over $\gf_{p^{m_2}}$ with weight distribution in Table \ref{tab4.2} and  $\overline{\cC_D}$ is self-orthogonal.
  \end{itemize}
\end{theorem}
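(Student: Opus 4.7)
\textbf{Proof proposal for Theorem \ref{th-4.1}.} The plan is to compute the Hamming weight of a generic codeword of $\overline{\cC_D}$ via additive characters, reduce the resulting character sum to the quantity $\Omega(b,c)$ handled in Lemma \ref{lem-omega}, and then count frequencies using Lemma \ref{lem-N}. Throughout, write a codeword of $\overline{\cC_D}$ as $\bc_{b,c}=(\tr_{p^m/p^{m_2}}(bx)+c)_{x\in D}$ for $b\in\gf_{p^m}$ and $c\in\gf_{p^{m_2}}$.

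First I would compute the length. Since $0\in D$ and $|D|=1+N_1$ with $a=0$ in Lemma \ref{lem-N}, the two formulas for $|D|$ (according to the parity of $m/m_1$) follow immediately once $G(\eta,\chi_1)$ is evaluated via Lemma \ref{lem-2N}; the exponent $l_2$ is precisely designed so that $G(\eta,\chi_1)=(-1)^{l_2}\sqrt{p^m}$ when $m$ is even. Next I would show $\dim\overline{\cC_D}=m/m_2+1$: the fact that $0\in D$ forces $\mathbf{1}\notin \cC_D$ (any would-be preimage $b$ would satisfy $\tr_{p^m/p^{m_2}}(0)=1$), so augmentation genuinely adds one $\gf_{p^{m_2}}$-dimension, and the remaining $m/m_2$ dimensions come from the injectivity of $b\mapsto(\tr_{p^m/p^{m_2}}(bx))_{x\in D}$, which follows because $D$ contains a $\gf_{p^{m_2}}$-basis of $\gf_{p^m}$ (any element lies in $D$ after multiplying its square-trace by a suitable zero-divisor, or more directly by a counting argument using Lemma \ref{lem-N}).

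For the weight distribution, I would use orthogonality to write
\begin{eqnarray*}
|D|-\wt(\bc_{b,c}) &=& \frac{1}{p^{m_1}p^{m_2}}\sum_{x\in\gf_{p^m}}\sum_{y\in\gf_{p^{m_1}}}\sum_{z\in\gf_{p^{m_2}}}\chi_1'\!\left(y\tr_{p^m/p^{m_1}}(x^2)\right)\chi_1''\!\left(z(\tr_{p^m/p^{m_2}}(bx)+c)\right).
\end{eqnarray*}
Splitting according to whether $y$ or $z$ is zero gives four sums. The $y=z=0$ term gives $p^m$. The $y=0$, $z\ne 0$ term vanishes unless $b=0$, and the $y\ne 0$, $z=0$ term reduces to a $\Omega(b,0)$-type expression by Lemma \ref{lem-weil} applied to $\sum_{x}\chi_1(yx^2)$. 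The main $y\ne 0$, $z\ne 0$ term, after substituting $y\mapsto -y$ and again invoking Lemma \ref{lem-weil} (the linear term $zbx$ contributes $\chi_1(z^2b^2/(4y))$ which after the substitution $y\mapsto y/4$ collapses to $\chi_1(-z^2b^2y)$), is exactly $G(\eta,\chi_1)\,\Omega(b,c)/(p^{m_1}p^{m_2})$ times a sign controlled by $\eta(y)$. Plugging in the values of $\Omega(b,c)$ from the $m_2\mid m_1$ branch of Lemma \ref{lem-omega} then gives an expression for $\wt(\bc_{b,c})$ depending only on whether $\tr_{p^m/p^{m_1}}(b^2)$ is zero, whether $c$ is zero, and (in the odd case) the value of $\eta'(-\tr_{p^m/p^{m_1}}(b^2))$. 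The frequency of each weight is obtained by applying Lemma \ref{lem-N} with $i=1$ to count the $b\in\gf_{p^m}$ producing each possibility; the frequencies then multiply by the number of admissible $c$.

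Finally, for self-orthogonality I would verify $p$-divisibility from the closed-form weights: in the odd case the weights that appear are of the form $p^{m-m_1}-(\text{factor involving }G(\eta',\chi_1'))$, which is a power of $p$ provided $m>m_1+2m_2$ (the bound $m>m_1+2m_2$ ensures the Gaussian-sum contribution, of absolute value $p^{m_1/2}$, is divisible by $p^{m_2}$, hence the weight is divisible by $p$ after clearing $p^{m_2}$ factors arising from the size of $\gf_{p^{m_2}}$), and in the even case all weights are automatically of this shape. Since $\mathbf{1}\in\overline{\cC_D}$ by construction, Lemma \ref{LH} applies and yields self-orthogonality. The main obstacle will be the last paragraph's bookkeeping: evaluating $\Omega(b,c)$ inside the weight formula introduces several sub-cases that must be matched correctly against the rows of Tables \ref{tab4.1} and \ref{tab4.2}, and the frequency counts must be double-checked by confirming $\sum_i A_i=p^{m+m_2}-1$.
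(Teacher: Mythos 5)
Your proposal follows essentially the same route as the paper's proof: expand $\wt(\bc_{(b,c)})$ by additive-character orthogonality, reduce the main term to $G(\eta,\chi_1)\Omega(b,c)$ via Lemma \ref{lem-weil} and the $m_2\mid m_1$ branch of Lemma \ref{lem-omega}, count frequencies with Lemma \ref{lem-N}, evaluate the signs with Lemma \ref{lem-2N} (giving $l_1,l_2$), and deduce self-orthogonality from $p$-divisibility together with Lemma \ref{LH}. One caveat: your justification of the dimension is shaky as stated --- showing that $D$ contains an $\gf_{p^{m_2}}$-basis ``by a counting argument'' cannot work on its own, since $\lvert D\rvert\leq p^{m-m_2}$ is no larger than a hyperplane $\ker(x\mapsto \tr_{p^m/p^{m_2}}(bx))$, and indeed for $\frac{m}{m_1}=2$ a nonzero $(b,c)$ can give the zero codeword; the correct argument, which the paper uses and which you already have in hand, is that under the hypotheses $\frac{m}{m_1}>1$ (odd case) resp.\ $\frac{m}{m_1}>2$ (even case) the computed weight formula shows $\wt(\bc_{(b,c)})>0$ for all $(b,c)\neq(0,0)$, so the dimension is $\frac{m}{m_2}+1$. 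Similarly, in the divisibility step the relevant quantity is $\lvert G(\eta,\chi_1)G(\eta',\chi_1')\rvert/p^{m_1+m_2}=p^{\frac{m-m_1}{2}-m_2}$ (not a contribution of absolute value $p^{m_1/2}$), and $m>m_1+2m_2$ together with $m\equiv m_1 \pmod 2$ makes this exponent at least $1$, which is exactly what gives $p$-divisibility.
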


  \begin{table}[h] \tiny
 \begin{center}
\caption{The weight distribution of $\overline{\cC_{D}}$ in Theorem \ref{th-4.1} ($\frac{m}{m_1}$ is odd).}\label{tab4.1}
\begin{tabular}{@{}ll@{}}
\toprule%
Weight & Frequency  \\
\midrule
$0$ & $1$\\
$p^{m-m_1}$ & $p^{m_2}-1$ \\
$p^{m-m_1}-\frac{p^m}{p^{m_1+m_2}}$ & $(p^{m_2})(p^{m-m_1}-1)$\\
$p^{m-m_1}-\frac{p^{m}-(-1)^{l_1}p^{\frac{m+m_1}{2}}}{p^{m_1+m_2}}$ & $(p^{m_2}-1)\left[\frac{p^{m_1}-1}{2}(p^{m-m_1}+\frac{(-1)^{l_1}p^{\frac{m+m_1}{2}}}{p^{m_1}})\right]$\\
$p^{m-m_1}-\frac{p^{m}+(-1)^{l_1}p^{\frac{m+m_1}{2}}}{p^{m_1+m_2}}$ & $(p^{m_2}-1)\left[\frac{p^{m_1}-1}{2}(p^{m-m_1}-\frac{(-1)^{l_1}p^{\frac{m+m_1}{2}}}{p^{m_1}})\right]$\\
$p^{m-m_1}-\frac{p^{m}}{p^{m_1+m_2}}-\frac{(-1)^{l_1}p^{\frac{m+m_1}{2}}(p^{m_2}-1)}{p^{m_1+m_2}}$ & $\frac{p^{m_1}-1}{2}(p^{m-m_1}+\frac{(-1)^{l_1}p^{\frac{m+m_1}{2}}}{p^{m_1}})$ \\
$p^{m-m_1}-\frac{p^{m}}{p^{m_1+m_2}}+\frac{(-1)^{l_1}p^{\frac{m+m_1}{2}}(p^{m_2}-1)}{p^{m_1+m_2}}$ & $\frac{p^{m_1}-1}{2}(p^{m-m_1}-\frac{(-1)^{l_1}p^{\frac{m+m_1}{2}}}{p^{m_1}})$ \\
\bottomrule
\end{tabular}
\end{center}
\end{table}

\begin{table}[h] \tiny
\begin{center}
\caption{The weight distribution of $\overline{\cC_{D}}$ in Theorem \ref{th-4.1} ($\frac{m}{m_1}$ is even).}\label{tab4.2}
\begin{tabular}{@{}ll@{}}
\toprule%
Weight & Frequency  \\
\midrule
$0$ & $1$\\
$p^{m-m_1}+\frac{(p^{m_1}-1)(-1)^{l_2}\sqrt[]{p^m}}{p^{m_1}}$ & $p^{m_2}-1$ \\
$p^{m-m_1}-\frac{p^m}{p^{m_1+m_2}}$ & $p^{m-m_1}+\frac{(p^{m_1}-1)(-1)^{l_2}\sqrt[]{p^m}}{p^{m_1}}-1$\\
$\frac{p^m(p^{m_2}-1)}{p^{m_1+m_2}}+\frac{(p^{m_1}-1)(-1)^{l_2}\sqrt[]{p^m}}{p^{m_1}}$ & $(p^{m_2}-1)(p^{m-m_1}+\frac{(p^{m_1}-1)(-1)^{l_2}\sqrt[]{p^m}}{p^{m_1}}-1)$\\
$p^{m-m_1}-\frac{p^m}{p^{m_1+m_2}}+\frac{(-1)^{l_2}\sqrt[]{p^m}(p^{m_1+m_2}-p^{m_1})}{p^{m_1+m_2}}$ & $p^m-p^{m-m_1}-\frac{(p^{m_1}-1)(-1)^{l_2}\sqrt[]{p^m}}{p^{m_1}}$\\
$p^{m-m_1}-\frac{p^m}{p^{m_1+m_2}}+\frac{(-1)^{l_2}\sqrt[]{p^m}\left[(p^{m_1}-1)p^{m_2}-p^{m_1}\right]}{p^{m_1+m_2}}$ & $(p^{m_2}-1)(p^m-p^{m-m_1}-\frac{(p^{m_1}-1)(-1)^{l_2}\sqrt[]{p^m}}{p^{m_1}})$ \\
\bottomrule
\end{tabular}
\end{center}
\end{table}
\begin{proof}
Denote by $n=\lvert D\rvert$. By Lemma \ref{lem-N}, we have
\begin{eqnarray}\label{eqn-n}
\nonumber n&=&\left \lvert \left\{x \in \gf_{p^m}:\tr_{p^m/p^{m_1}}(x^2)=0\right\} \right\rvert\\
&=&\left\{\begin{array}{ll}
                                 p^{m-m_1} &\text {if $\frac{m}{m_1}$ is odd,} \\
                                 p^{m-m_1}+\frac{(p^{m_1}-1)G(\eta,\chi_1)}{p^{m_1}} &\text {if $\frac{m}{m_1}$ is even.} \\
                               \end{array}\right.
\end{eqnarray}

For any codeword $\bc_{(b,c)}=(\tr_{p^m/p^{m_2}}(bx)+c)_{x \in D} \in \overline{\cC_{D}}$, by the orthogonal relation of additive characters, we have
\begin{eqnarray}\label{eqn-wt}
\nonumber \wt(\bc_{(b,c)})&=&n-\left\lvert \left\{x \in \gf_{p^m}:\tr_{p^m/p^{m_1}}(x^2)=0\mbox{ and } \tr_{p^m/p^{m_2}}(bx)+c=0\right\}\right\rvert\\
\nonumber &=&n-\frac{1}{p^{m_1+m_2}}\sum_{x\in \gf_{p^m}}\left(\sum_{y\in \gf_{p^{m_1}}}\zeta_{p}^{\tr_{p^{m_1}/p}(y\tr_{p^m/p^{m_1}}(x^2))}\right)\\
\nonumber & &\cdot \left(\sum_{z\in \gf_{p^{m_2}}}\zeta_{p}^{\tr_{p^{m_2}/p}(z(\tr_{p^m/p^{m_2}}(bx)+c))}\right)\\
\nonumber &=&n-\frac{1}{p^{m_1+m_2}}\sum_{x\in \gf_{p^m}}\sum_{y\in \gf_{p^{m_1}}}\sum_{z\in \gf_{p^{m_2}}}\chi_1(yx^2+zbx)\chi_1''(zc)\\
\nonumber &=&n-p^{m-m_1-m_2}-\frac{1}{p^{m_1+m_2}}\sum_{x\in \gf_{p^m}}\sum_{z\in \gf_{p^{m_2}}^*}\chi_1(zbx)\chi_1''(zc)\\
\nonumber & &-\frac{1}{p^{m_1+m_2}}\sum_{x\in \gf_{p^m}}\sum_{y\in \gf_{p^{m_1}}^*}\chi_1(yx^2)\\
& &-\frac{1}{p^{m_1+m_2}}\sum_{x\in \gf_{p^m}}\sum_{y\in \gf_{p^{m_1}}^*}\sum_{z\in \gf_{p^{m_2}}^*}\chi_1(yx^2+zbx)\chi_1''(zc).
\end{eqnarray}
By the orthogonal relation of additive characters, we obtain that
\begin{eqnarray}\label{eqn-sum1}
\nonumber \sum_{x\in \gf_{p^m}}\sum_{z\in \gf_{p^{m_2}}^*}\chi_1(zbx)\chi_1''(zc)&=&\sum_{z\in \gf_{p^{m_2}}^*}\chi_1''(zc)\sum_{x\in \gf_{p^m}}\chi_1(zbx)\\
 &=& \left\{\begin{array}{ll}
                                 p^m(p^{m_2}-1) &\text {if $b=0$ and $c=0$,}\\
                                 -p^m &\text{if $b=0$ and $c\neq 0$,}\\
                                 0 &\text{if $b\neq 0$ and $c\in \gf_{p^{m_2}}$.}\\
                               \end{array}\right.
\end{eqnarray}
By Lemma \ref{lem-weil} and the orthogonal relation of multiplicative characters, we have
\begin{eqnarray}\label{eqn-sum2}
\nonumber \sum_{x\in \gf_{p^m}}\sum_{y\in \gf_{p^{m_1}}^*}\chi_1(yx^2)&=&\sum_{y\in \gf_{p^{m_1}}^*}\sum_{x\in \gf_{p^m}}\chi_1(yx^2)\\
\nonumber &=&G(\eta,\chi_1)\sum_{y\in \gf_{p^{m_1}}^*}\eta(y)\\
&=&\left\{\begin{array}{ll}
0 &\text {if $\frac{m}{m_1}$ is odd,}\\
(p^{m_1}-1)G(\eta,\chi_1) &\text{if $\frac{m}{m_1}$ is even,}\\
\end{array}\right.
\end{eqnarray}
and
\begin{eqnarray}\label{eqn-sum3}
\nonumber & &\sum_{x\in \gf_{p^m}}\sum_{y\in \gf_{p^{m_1}}^*}\sum_{z\in \gf_{p^{m_2}}^*}\chi_1(yx^2+zbx)\chi_1''(zc)\\
\nonumber &=&\sum_{y\in \gf_{p^{m_1}}^*}\sum_{z\in \gf_{p^{m_2}}^*}\chi_1''(zc)\sum_{x\in \gf_{p^m}}\chi_1(yx^2+zbx)\\
\nonumber &=& \sum_{y\in \gf_{p^{m_1}}^*}\sum_{z\in \gf_{p^{m_2}}^*}\chi_1''(zc) \chi_1\left(-z^2b^2(4y)^{-1}\right)\eta(y)G(\eta, \chi_1)\\
\nonumber &=& G(\eta, \chi_1)\sum_{z\in \gf_{p^{m_2}}^*}\chi_1''(zc)\sum_{y\in \gf_{p^{m_1}}^*} \chi_1\left(-z^2b^2y\right)\eta(y)\\
&=&G(\eta, \chi_1)\Omega(b,c),
\end{eqnarray}
where $\Omega(b,c)$ was defined in Lemma \ref{lem-omega}.
Combining Equations (\ref{eqn-n}), (\ref{eqn-wt}), (\ref{eqn-sum1}), (\ref{eqn-sum2}) and (\ref{eqn-sum3}), we directly obtain that
  \begin{eqnarray*}
    & &\wt(\bc_{(b,c)}) \\
    &=& \left\{\begin{array}{ll}
                                0 & \substack{\mbox{if }b=c=0,}\\
                                p^{m-m_1} & \substack{\mbox{if }b=0\mbox{ and }c \neq 0,} \\
                                p^{m-m_1}-\frac{p^m}{p^{m_1+m_2}} & \substack{\mbox{if }b\neq0\mbox{ and }\\ \tr_{p^m/p^{m_1}}(b^{2})=0,}  \\
                                p^{m-m_1}-\frac{p^{m}-G(\eta,\chi_1)G(\eta',\chi_1')\eta'(-1)}{p^{m_1+m_2}}& \substack{\mbox{if }b\neq0\mbox{, }c \neq 0 \mbox{ and }\\ \eta'(\tr_{p^m/p^{m_1}}(b^{2}))=1,}  \\
                                p^{m-m_1}-\frac{p^{m}+G(\eta,\chi_1)G(\eta',\chi_1')\eta'(-1)}{p^{m_1+m_2}} & \substack{\mbox{if }b\neq0\mbox{, }c \neq 0\mbox{ and }\\ \eta'(\tr_{p^m/p^{m_1}}(b^{2}))=-1,}  \\
                                p^{m-m_1}-\frac{p^{m}}{p^{m_1+m_2}}-\frac{G(\eta,\chi_1)G(\eta',\chi_1')\eta'(-1)(p^{m_2}-1)}{p^{m_1+m_2}} & \substack{\mbox{if }b\neq0\mbox{, } c=0,\mbox{ and }\\ \eta'(\tr_{p^m/p^{m_1}}(b^{2}))=1,}  \\
                                p^{m-m_1}-\frac{p^{m}}{p^{m_1+m_2}}+\frac{G(\eta,\chi_1)G(\eta',\chi_1')\eta'(-1)(p^{m_2}-1)}{p^{m_1+m_2}} & \substack{\mbox{if }b\neq0\mbox{, } c=0,\mbox{ and }\\  \eta'(\tr_{p^m/p^{m_1}}(b^{2}))=-1,}  \\
     \end{array} \right.
  \end{eqnarray*}
  for odd $\frac{m}{m_1}$, and
    \begin{eqnarray*}
    \wt(\bc_{(b,c)}) &=& \left\{\begin{array}{ll}
                                 0 & \substack{\mbox{if }b=c=0,}  \\
                                 p^{m-m_1}+\frac{(p^{m_1}-1)G(\eta,\chi_1)}{p^{m_1}} & \substack{\mbox{if }b=0\mbox{ and }c \neq 0,}  \\
                                p^{m-m_1}-\frac{p^m}{p^{m_1+m_2}} & \substack{\mbox{if }b\neq0, c=0\mbox{ and}\\ \tr_{p^m/p^{m_1}}(b^{2})=0,}  \\
                                \frac{p^m(p^{m_2}-1)}{p^{m_1+m_2}}+\frac{(p^{m_1}-1)G(\eta,\chi_1)}{p^{m_1}} & \substack{\mbox{if }b\neq0, c \neq 0
                                \mbox{ and}\\ \tr_{p^m/p^{m_1}}(b^{2})=0,}  \\
                                p^{m-m_1}-\frac{p^m}{p^{m_1+m_2}}+\frac{G(\eta,\chi_1)(p^{m_1+m_2}-p^{m_1})}{p^{m_1+m_2}} & \substack{\mbox{if }b\neq0, c=0\mbox{ and}\\ \tr_{p^m/p^{m_1}}(b^{2})\neq0,}  \\
                               p^{m-m_1}-\frac{p^m}{p^{m_1+m_2}}+\frac{G(\eta,\chi_1)\left[(p^{m_1}-1)p^{m_2}-p^{m_1}\right]}{p^{m_1+m_2}} & \substack{\mbox{if }b\neq0, c\neq0 \mbox{ and}\\ \tr_{p^m/p^{m_1}}(b^{2})\neq0,}  \\
                                \end{array} \right.
\end{eqnarray*}
for even $\frac{m}{m_1}$.
In the following, we will determine the frequency of the nonzero Hamming weight of $\overline{\cC_D}$ in two cases:

{Case 1:} Let $\frac{m}{m_1}$ be odd. Denote by $w_1=p^{m-m_1}$, $w_2=p^{m-m_1}-\frac{p^m}{p^{m_1+m_2}}$, $w_3=p^{m-m_1}-\frac{p^{m}-G(\eta,\chi_1)G(\eta',\chi_1')\eta'(-1)}{p^{m_1+m_2}}$,
   $w_4=p^{m-m_1}-\frac{p^{m}+G(\eta,\chi_1)G(\eta',\chi_1')\eta'(-1)}{p^{m_1+m_2}}$,
    $w_5= p^{m-m_1}-\frac{p^{m}}{p^{m_1+m_2}}-\frac{G(\eta,\chi_1)G(\eta',\chi_1')\eta'(-1)(p^{m_2}-1)}{p^{m_1+m_2}}$.
    $w_6= p^{m-m_1}-\frac{p^{m}}{p^{m_1+m_2}}+\frac{G(\eta,\chi_1)G(\eta',\chi_1')\eta'(-1)(p^{m_2}-1)}{p^{m_1+m_2}}$.
Now we determine the frequency $A_{w_i}$, $1 \leq i \leq 6$. It is obvious that $A_{w_1}=p^{m_2}-1$. Let $\gf_{p^m}^*=\langle\alpha\rangle$ and $\gf_{p^{m_1}}^*=\langle\beta\rangle$. By Lemma \ref{lem-N}, we have
\begin{eqnarray*}
  A_{w_2} &=& p^{m_2}\left\lvert \left\{b \in \gf_{p^{m}}^* : \tr_{p^m/p^{m_1}}(b^{2})=0 \right\} \right\rvert=p^{m_2}(p^{m-m_1}-1),\\
  A_{w_5} &=& \left\lvert \left\{b \in \gf_{p^m}^*:\eta'(\tr_{p^m/p^{m_1}}(b^{2}))=1\right\} \right\rvert\\
  &=& \left\lvert \left\{b \in \gf_{p^m}^* : \tr_{p^m/p^{m_1}}(b^{2}) \in \langle\beta^2\rangle\right\} \right\rvert\\
  &=& \left(p^{m-m_1}+\frac{G(\eta,\chi_1)G(\eta',\chi_1')\eta'(-1)}{p^{m_1}}\right) \cdot \frac{p^{m_1}-1}{2},\\
\end{eqnarray*}
and
\begin{eqnarray*}
  A_{w_6} &=& \left\lvert \left\{b \in \gf_{p^m}^*:\eta'(\tr_{p^m/p^{m_1}}(b^{2}))=-1\right\} \right\rvert\\
  &=& \left\lvert \left\{b \in\in \gf_{p^m}^*: \tr_{p^m/p^{m_1}}(b^{2}) \in \beta\langle\beta^2\rangle\right\} \right\rvert\\
  &=& \left(p^{m-m_1}-\frac{G(\eta,\chi_1)G(\eta',\chi_1')\eta'(-1)}{p^{m_1}}\right) \cdot \frac{p^{m_1}-1}{2}.\\
\end{eqnarray*}
It is obvious that
\begin{eqnarray*}
A_{w_3}&=&(p^{m_2}-1)A_{w_5}\\
&=&(p^{m_2}-1)\left[\left(p^{m-m_1}+\frac{G(\eta,\chi_1)G(\eta',\chi_1')\eta'(-1)}{p^{m_1}}\right) \cdot \frac{p^{m_1}-1}{2}\right]
\end{eqnarray*}
 and
 \begin{eqnarray*}
 A_{w_4}&=&(p^{m_2}-1)A_{w_6}\\
 &=&(p^{m_2}-1)\left[\left(p^{m-m_1}-\frac{G(\eta,\chi_1)G(\eta',\chi_1')\eta'(-1)}{p^{m_1}}\right) \cdot \frac{p^{m_1}-1}{2}\right].
 \end{eqnarray*}
Besides, by Lemma \ref{lem-2N}, we have
\begin{eqnarray*}
  G(\eta,\chi_1)G(\eta',\chi_1')\eta'(-1)=(-1)^{l_1}p^{\frac{m+m_1}{2}}
\end{eqnarray*}
where $l_1=({\frac{p-1}{2}})^2\frac{m+m_1}{2}+{m_1}+m-2+\frac{p^{m_1}-1}{2}$ .
Then the weight distribution of $\overline{\cC_D}$ for odd $\frac{m}{m_1}$ directly follows in Table \ref{tab4.1}.
If $m>m_1$, then the dimension of $\overline{\cC_D}$ is $\frac{m}{m_2}+1$ as the zero codeword occurs only once when $(b,c)$ runs through $\gf_{p^m} \times \gf_{p^{m_2}}$.

{Case 2:} Let $\frac{m}{m_1}$ be even.
Denote by $w_1'=p^{m-m_1}+\frac{(p^{m_1}-1)G(\eta,\chi_1)}{p^{m_1}}$, $w_2'=p^{m-m_1}-\frac{p^m}{p^{m_1+m_2}}$, $w_3'=\frac{p^m(p^{m_2}-1)}{p^{m_1+m_2}}+\frac{(p^{m_1}-1)G(\eta,\chi_1)}{p^{m_1}}$,
   $w_4'= p^{m-m_1}-\frac{p^m}{p^{m_1+m_2}}+\frac{G(\eta,\chi_1)(p^{m_1+m_2}-p^{m_1})}{p^{m_1+m_2}}$,
    $w_5'= p^{m-m_1}-\frac{p^m}{p^{m_1+m_2}}+\frac{G(\eta,\chi_1)\left[(p^{m_1}-1)p^{m_2}-p^{m_1}\right]}{p^{m_1+m_2}}$.
Now we determine the frequency $A_{w_i'}$, $1 \leq i \leq 5$. It is obvious that $A_{w_1'}=p^{m_2}-1$. By Lemma \ref{lem-N},we have
\begin{eqnarray*}
  A_{w_2'} = \left\lvert \{b \in \gf_{p^{m}}^* : \tr_{p^m/p^{m_1}}(b^{2})=0\}\right \rvert=p^{m-m_1}+\frac{(p^{m_1}-1)G(\eta,\chi_1)}{p^{m_1}}-1
\end{eqnarray*}
and $A_{w_4'}=p^{m}-1-A_{w_2'}=p^m-p^{m-m_1}+\frac{(p^{m_1}-1)G(\eta,\chi_1)}{p^{m_1}}$. Furthermore, we deduce that $A_{w_3'}=(p^{m_2}-1)A_{w_2'}=(p^{m_2}-1)\left(p^{m-m_1}+\frac{(p^{m_1}-1)G(\eta,\chi_1)}{p^{m_1}}-1\right)$ and $A_{w_5'}=(p^{m_2}-1)A_{w_4'}=(p^{m_2}-1)\left(p^m-p^{m-m_1}+\frac{(p^{m_1}-1)G(\eta,\chi_1)}{p^{m_1}}\right)$.
By Lemma  \ref{lem-2N}, we have
\begin{eqnarray*}
  G(\eta,\chi_1)=(-1)^{l_2}\sqrt{p^m},
\end{eqnarray*}
where $l_2=({\frac{p-1}{2}})^2\frac{m}{2}+m-1$.
Then the weight distribution of $\overline{\cC_D}$ for even $\frac{m}{m_1}$  directly follows in Table \ref{tab4.2}. If $m>2m_1$, then the dimension of $\overline{\cC_D}$ is $\frac{m}{m_2}+1$ as the zero codeword occurs only once when $(b,c)$ runs through $\gf_{p^m} \times \gf_{p^{m_2}}$.

The self-orthogonality of $\overline{\cC_D}$ directly follows from the divisibility of $\overline{\cC_D}$ and Lemma \ref{LH}. Then the desired conclusions follow.
\end{proof}

We use Magma to give two examples in the following which are consistent with the results in Theorem \ref{th-4.1}.

\begin{example}
Let $(p,m,m_1,m_2)=(3,6,2,1)$. Then the code $\overline{\cC_{D}}$ in Theorem \ref{th-4.1} has parameters $[81,7,48]$ and weight enumerator $1+360z^{48}+576z^{51}+240z^{54}+720z^{57}+288z^{60}+2z^{81}$. Besides, $\overline{\cC_{D}}$ is $3$-divisible and self-orthogonal.
\end{example}

\begin{example}
Let $(p,m,m_1,m_2)=(3,8,2,1)$. Then the code $\overline{\cC_{D}}$  in Theorem \ref{th-4.1} has parameters $[657,9,414]$ and weight enumerator $1+1312z^{414}+5904z^{432}+11808z^{441}+656z^{486}+2z^{657}$. Besides, $\overline{\cC_{D}}$ is $3$-divisible and self-orthogonal.
\end{example}

\subsection{The case that $m_1 \mid m_2\mid m$}\label{sec4.1}
In this subsection, we study the code $\overline{\cC_D}$ for the case that $m_1 \mid m_2\mid m$.
\begin{theorem}\label{th-4.2}
  Let $p$ be an odd prime. Let $m, m_1, m_2$ be three positive integers such that $m_1 \mid m_2\mid m$. Denote by $l_3=({\frac{p-1}{2}})^2\frac{m+m_2}{2}+{m_2}+m-2$, $l_4=({\frac{p-1}{2}})^2\frac{m+m_2}{2}+{m_2}+m-2+\frac{p^{m_2}-1}{2}$,
  $l_5=({\frac{p-1}{2}})^2\frac{m+m_1+m_2}{2}+m+{m_1}+{m_2}-3+\frac{p^{m_2}-1}{2}$, $l_6=({\frac{p-1}{2}})^2\frac{m_1+m_2}{2}+{m_1}+{m_2}-2+\frac{p^{m_2}-1}{2}$,
  $l_7=({\frac{p-1}{2}})^2\frac{m_2}{2}+{m_2}-1$. Its parameters and weight distributions are respectively given in four cases as follows.
    \begin{itemize}
  \item Let $\frac{m}{m_1}$ be odd and $\frac{m_2}{m_1}$ be odd. If $m>m_1$ then $\overline{\cC_D}$ is an $[p^{m-m_1}, \frac{m}{m_2}+1]$ linear code over $\gf_{p^{m_2}}$ with weight distribution in Table \ref{tab5.1}.
  Particularly, if $m>2m_1+m_2$, then  $\overline{\cC_D}$ is self-orthogonal.
  \item Let $\frac{m}{m_1}$ be even and $\frac{m_2}{m_1}$ be odd. If $m>2m_1$, then $\overline{\cC_D}$ is an $[p^{m-m_1}+\frac{(p^{m_1}-1)(-1)^{l_2}\sqrt[]{p^m}}{p^{m_1}}, \frac{m}{m_2}+1]$ code over $\gf_{p^{m_2}}$ with weight distribution in Table \ref{tab5.2}.   Particularly, if $m>m_1+m_2$, then  $\overline{\cC_D}$ is self-orthogonal.
  \item Let $\frac{m}{m_1}$ be even, $\frac{m_2}{m_1}$ be even and $\frac{m}{m_2}$ be even.  If $m>2m_1$, then $\overline{\cC_D}$ is an $[p^{m-m_1}+\frac{(p^{m_1}-1)(-1)^{l_2}\sqrt[]{p^m}}{p^{m_1}}, \frac{m}{m_2}+1]$ code over $\gf_{p^{m_2}}$ with weight distribution in Table \ref{tab5.31}. Particularly, if $m>2m_1+m_2$, then  $\overline{\cC_D}$ is self-orthogonal.
  \item Let $\frac{m}{m_1}$ be even, $\frac{m_2}{m_1}$ be even and $\frac{m}{m_2}$ be odd.  If $m>2m_1$, then $\overline{\cC_D}$ is an $[p^{m-m_1}+\frac{(p^{m_1}-1)(-1)^{l_2}\sqrt[]{p^m}}{p^{m_1}}, \frac{m}{m_2}+1]$ code over $\gf_{p^{m_2}}$ with weight distribution in Table \ref{tab5.32}. Particularly, if $m>2m_1+m_2$, then  $\overline{\cC_D}$ is self-orthogonal.
  \end{itemize}
  \end{theorem}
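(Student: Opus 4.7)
The overall plan mirrors the proof of Theorem \ref{th-4.1}: I will compute $n=|D|$, express $\wt(\bc_{(b,c)})$ as $n$ minus a character sum, decompose the character sum into the same four pieces as in Equation (\ref{eqn-wt}), evaluate three of those pieces by direct orthogonality and the fourth via $\Omega(b,c)$ from Lemma \ref{lem-omega}, and finally count frequencies. Since the defining set $D$ in (\ref{D}) does not depend on $m_2$, the computation of $n$ in (\ref{eqn-n}) via Lemma \ref{lem-N} carries over verbatim, and the decomposition of $\wt(\bc_{(b,c)})$ into the four sums together with the evaluation of Equations (\ref{eqn-sum1}), (\ref{eqn-sum2}), (\ref{eqn-sum3}) is identical to that in Theorem \ref{th-4.1}. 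The difference is that here $m_1\mid m_2$, so I feed Equation (\ref{eqn-sum3}) into the \emph{second} half of Lemma \ref{lem-omega} rather than the first.

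Next I substitute the appropriate sub-case of Lemma \ref{lem-omega} into the weight formula; this yields a piecewise expression for $\wt(\bc_{(b,c)})$ in which the novel ingredient (compared with Theorem \ref{th-4.1}) is the auxiliary quantity $\Delta=\tr_{p^{m_2}/p^{m_1}}\!\left(\tfrac{c^{2}}{\tr_{p^{m}/p^{m_2}}(b^{2})}\right)$. Substituting the explicit value $G(\eta,\chi_{1})=(-1)^{l_{2}}\sqrt{p^{m}}$ from Lemma \ref{lem-2N} together with the corresponding values of $G(\eta',\chi_{1}')$ and $G(\eta'',\chi_{1}'')$, and collecting signs according to the parities of $m/m_{1}$, $m_{2}/m_{1}$, and (in the even-even case) $m/m_{2}$, produces the distinct weights listed in Tables \ref{tab5.1}--\ref{tab5.32}, with the constants $l_{3},l_{4},l_{5},l_{6},l_{7}$ arising from the various products of quadratic Gaussian sums.

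The frequency count is the main technical step. For each nonzero weight I determine how many pairs $(b,c)\in\gf_{p^m}\times\gf_{p^{m_2}}$ yield that weight. Lemma \ref{lem-N} (applied with $i=2$) counts the $b$'s with $\tr_{p^m/p^{m_2}}(b^{2})$ equal to $0$, a nonzero square, or a nonzero non-square; and Lemma \ref{lem-f} counts, for each such $b$, the number of $c$'s with $\Delta=0$ versus $\Delta\neq 0$ (further split, when $\frac{m_{2}}{m_{1}}$ is even, by the quadratic character of $\Delta$). The various cross products of these counts, refined by the quadratic character $\eta'(\Delta)$ when it appears in Lemma \ref{lem-omega}, give the frequencies in the tables. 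The main obstacle is bookkeeping: the four parity regimes interact with the four trace/$c$ cases and the $\Delta$-split, so I will organise the computation by first fixing the parity regime and then running through the partition of $(b,c)$ into disjoint classes, keeping the quadratic character values explicit so that the Gaussian sum constants collapse cleanly.

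Finally, from the weight tables I read off that every nonzero weight is divisible by $p$ (indeed by a higher power of $p$, once the Gaussian sum factors are resolved), so $\overline{\cC_{D}}$ is $p$-divisible. Since $\bone\in\overline{\cC_{D}}$ by construction (\ref{CD}), Lemma \ref{LH} immediately yields self-orthogonality in each sub-case; the stated lower bounds on $m$ (for example $m>2m_{1}+m_{2}$) are exactly the conditions that guarantee the dimension is $\frac{m}{m_{2}}+1$, i.e.\ that the map $(b,c)\mapsto\bc_{(b,c)}$ has trivial kernel, which one verifies by showing that under these inequalities the zero weight occurs only for $(b,c)=(0,0)$ using the tables.
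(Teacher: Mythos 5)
Your overall plan is exactly the paper's: the paper proves Theorem \ref{th-4.2} by repeating the argument of Theorem \ref{th-4.1} with the second half of Lemma \ref{lem-omega}, using Lemma \ref{lem-N} (with $i=2$) and Lemma \ref{lem-f} for the frequency counts and Lemmas \ref{lem-2N}, \ref{lem-weil}, \ref{LH} to finish, and your first three paragraphs reproduce that route faithfully (modulo one bookkeeping slip: the quadratic character $\eta'(\Delta)$ enters Lemma \ref{lem-omega} in the regime $\frac{m}{m_1}$ even, $\frac{m_2}{m_1}$ \emph{odd}, not when $\frac{m_2}{m_1}$ is even; in the even-$\frac{m_2}{m_1}$ cases the split is only $\Delta=0$ versus $\Delta\neq 0$, with Lemma \ref{lem-f} contributing through $\eta''(\tr_{p^m/p^{m_2}}(b^2))$).

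There is, however, a genuine error in your final step: you have swapped the roles of the numerical hypotheses. The conditions $m>m_1$ (first bullet) and $m>2m_1$ (remaining bullets) are what guarantee the dimension $\frac{m}{m_2}+1$, i.e.\ that the zero codeword arises only from $(b,c)=(0,0)$; the stronger conditions $m>2m_1+m_2$ (resp.\ $m>m_1+m_2$ in the second bullet) are not dimension conditions but are precisely what is needed for the $p$-divisibility on which the self-orthogonality rests. Your claim that ``from the weight tables I read off that every nonzero weight is divisible by $p$'' unconditionally is false: a weight such as $p^{m-m_1}-\frac{p^{m}-(-1)^{l_4}p^{\frac{m+m_2}{2}}}{p^{m_1+m_2}}=p^{m-m_1}-p^{m-m_1-m_2}+(-1)^{l_4}p^{\frac{m+m_2}{2}-m_1-m_2}$ is divisible by $p$ only when $\frac{m+m_2}{2}-m_1-m_2\geq 1$, i.e.\ $m>2m_1+m_2$ (and analogously $\frac{m-m_1-m_2}{2}\geq 1$, i.e.\ $m>m_1+m_2$, for the $l_5$-weights in Table \ref{tab5.2}). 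Concretely, for $p=3$, $m=3$, $m_1=m_2=1$ the code is the $[9,4,4]$ code of Example \ref{optimal-1}, which has a codeword of weight $4$ and is neither $3$-divisible nor self-orthogonal, so an unconditional divisibility claim cannot stand. To repair the proof you must verify, case by case, that under the stated inequalities every weight in Tables \ref{tab5.1}--\ref{tab5.32} is divisible by $p$, and only then invoke $\mathbf{1}\in\overline{\cC_D}$ together with Lemma \ref{LH}; the trivial-kernel argument for the dimension needs only the weaker conditions $m>m_1$ or $m>2m_1$.
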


\begin{table}[h]
\tiny
\begin{center}
\caption{The weight distribution of $\overline{\cC_{D}}$ in Theorem \ref{th-4.2} ($\frac{m}{m_1}$ is odd and $\frac{m_2}{m_1}$ is odd).}\label{tab5.1}
\begin{tabular}{@{}ll@{}}
\toprule%
Weight & Frequency  \\
\midrule
$0$ & $1$\\
$p^{m-m_1}$ & $p^{m_2}-1$ \\
$p^{m-m_1}-\frac{p^m}{p^{m_1+m_2}}$ & $p^{m_2}(p^{m-m_2}-1)$\\
$p^{m-m_1}-\frac{p^m+(p^{m_1}-1)(-1)^{l_4}p^{\frac{m+m_2}{2}}}{p^{m_1+m_2}}$ & $\frac{p^{m_2-m_1}\cdot(p^{m_2}-1)}{2}(p^{m-m_2}+\frac{(-1)^{l_4}p^{\frac{m+m_2}{2}}}{p^{m_2}})$\\
$p^{m-m_1}-\frac{p^m-(p^{m_1}-1)(-1)^{l_4}p^{\frac{m+m_2}{2}}}{p^{m_1+m_2}}$ &
$\frac{p^{m_2-m_1}\cdot(p^{m_2}-1)}{2}(p^{m-m_2}-\frac{(-1)^{l_4}p^{\frac{m+m_2}{2}}}{p^{m_2}})$\\
$p^{m-m_1}-\frac{p^m-(-1)^{l_4}p^{\frac{m+m_2}{2}}}{p^{m_1+m_2}}$ & $\frac{(p^{m_2}-p^{m_2-m_1})(p^{m_2}-1)}{2}(p^{m-m_2}+\frac{(-1)^{l_4}p^{\frac{m+m_2}{2}}}{p^{m_2}})$\\
$p^{m-m_1}-\frac{p^m+(-1)^{l_4}p^{\frac{m+m_2}{2}}}{p^{m_1+m_2}}$ &
$\frac{(p^{m_2}-p^{m_2-m_1})(p^{m_2}-1)}{2}(p^{m-m_2}-\frac{(-1)^{l_4}p^{\frac{m+m_2}{2}}}{p^{m_2}})$\\
\bottomrule
\end{tabular}
\end{center}
\end{table}

\begin{table}[h]
\tiny
\begin{center}
\caption{The weight distribution of $\overline{\cC_{D}}$ in Theorem \ref{th-4.2} ($\frac{m}{m_1}$ is even and $\frac{m_2}{m_1}$ is odd).}\label{tab5.2}
\begin{tabular}{@{}ll@{}}
\toprule%
Weight & Frequency  \\
\midrule
$0$ & $1$\\
$p^{m-m_1}+\frac{(p^{m_1}-1)(-1)^{l_2}\sqrt[]{p^m}}{p^{m_1}}$ & $p^{m_2}-1$ \\
$p^{m-m_1}-\frac{p^m}{p^{m_1+m_2}}$ & $p^{m-m_2}+\frac{(p^{m_2}-1)(-1)^{l_2}\sqrt[]{p^m}}{p^{m_2}}-1$\\
$p^{m-m_1}-\frac{p^m}{p^{m_1+m_2}}+\frac{(-1)^{l_2}\sqrt[]{p^m}(p^{m_1}-1)}{p^{m_1}}$ & $(p^{m_2}-1)(p^{m-m_2}+\frac{(p^{m_2}-1)(-1)^{l_2}\sqrt[]{p^m}}{p^{m_2}}-1)$\\
& $+p^{m_2-m_1}(p^m-p^{m-m_2}-\frac{(-1)^{l_2}\sqrt[]{p^m}(p^{m_2}-1)}{p^{m_2}})$\\
$p^{m-m_1}+\frac{(-1)^{l_2}\sqrt[]{p^m}(p^{m_1}-1)}{p^{m_1}}-\frac{p^m+(-1)^{l_5}p^{\frac{m+m_1+m_2}{2}}}{p^{m_1+m_2}}$ & $\substack{\frac{(p^{m_2}-1)(p^{m_1}-1)}{2}(p^{m_2-m_1}+\frac{(-1)^{l_6}p^{\frac{m_1+m_2}{2}}}{p^{m_1}})\\ \cdot(p^{m-m_2}-\frac{(-1)^{l_2}\sqrt[]{p^m}}{p^{m_2}})}$\\
$p^{m-m_1}+\frac{(-1)^{l_2}\sqrt[]{p^m}(p^{m_1}-1)}{p^{m_1}}-\frac{p^m-(-1)^{l_5}p^{\frac{m+m_1+m_2}{2}}}{p^{m_1+m_2}}$ &
$\substack{\frac{(p^{m_2}-1)(p^{m_1}-1)}{2}(p^{m_2-m_1}-\frac{(-1)^{l_6}p^{\frac{m_1+m_2}{2}}}{p^{m_1}})\\ \cdot(p^{m-m_2}-\frac{(-1)^{l_2}\sqrt[]{p^m}}{p^{m_2}})}$\\

\bottomrule
\end{tabular}
\end{center}
\end{table}

\begin{table}[h]
\tiny
\begin{center}
\caption{The weight distribution of $\overline{\cC_{D}}$ in Theorem \ref{th-4.2} ($\frac{m}{m_1}$ is even, $\frac{m_2}{m_1}$ is even and $\frac{m}{m_2}$ is even).}\label{tab5.31}
\begin{tabular}{@{}ll@{}}
\toprule%
Weight & Frequency  \\
\midrule
$0$ & $1$\\
$p^{m-m_1}+\frac{(p^{m_1}-1)(-1)^{l_2}\sqrt[]{p^m}}{p^{m_1}}$ & $p^{m_2}-1$ \\
$p^{m-m_1}-\frac{p^m}{p^{m_1+m_2}}$ & $p^{m-m_2}+\frac{(p^{m_2}-1)(-1)^{l_2}\sqrt[]{p^m}}{p^{m_2}}-1$\\
$p^{m-m_1}+\frac{(-1)^{l_2}\sqrt[]{p^m}(p^{m_1}-1)}{p^{m_1}}-\frac{p^m+(p^{m_1}-1)(-1)^{l_4}p^{\frac{m+m_2}{2}}}{p^{m_1+m_2}}$ &
$\substack{\frac{(p^{m_2}-1)}{2}(p^{m-m_2}-\frac{(-1)^{l_2}\sqrt[]{p^m}}{p^{m_2}}) \\ \cdot(p^{m_2-m_1}+\frac{(-1)^{l_7}\sqrt[]{p^{m_2}}(p^{m_1}-1)}{p^{m_1}})}$\\
$p^{m-m_1}+\frac{(-1)^{l_2}\sqrt[]{p^m}(p^{m_1}-1)}{p^{m_1}}-\frac{p^m-(p^{m_1}-1)(-1)^{l_4}p^{\frac{m+m_2}{2}}}{p^{m_1+m_2}}$ &
$\substack{\frac{(p^{m_2}-1)}{2}(p^{m-m_2}-\frac{(-1)^{l_2}\sqrt[]{p^m}}{p^{m_2}}) \\ \cdot(p^{m_2-m_1}-\frac{(-1)^{l_7}\sqrt[]{p^{m_2}}(p^{m_1}-1)}{p^{m_1}})}$\\
$p^{m-m_1}+\frac{(-1)^{l_2}\sqrt[]{p^m}(p^{m_1}-1)}{p^{m_1}}-\frac{p^m}{p^{m_1+m_2}}$ &
$(p^{m-m_2}+\frac{(p^{m_2}-1)(-1)^{l_2}\sqrt[]{p^m}}{p^{m_2}}-1)(p^{m_2}-1)$\\
$p^{m-m_1}+\frac{(-1)^{l_2}\sqrt[]{p^m}(p^{m_1}-1)}{p^{m_1}}-\frac{p^m-(-1)^{l_4}p^{\frac{m+m_2}{2}}}{p^{m_1+m_2}}$ &
$\substack{\frac{(p^{m_2}-1)}{2}(p^{m-m_2}-\frac{(-1)^{l_2}\sqrt[]{p^m}}{p^{m_2}}) \\ \cdot(p^{m_2}-p^{m_2-m_1}-\frac{(-1)^{l_7}\sqrt[]{p^{m_2}}(p^{m_1}-1)}{p^{m_1}})}$\\
$p^{m-m_1}+\frac{(-1)^{l_2}\sqrt[]{p^m}(p^{m_1}-1)}{p^{m_1}}-\frac{p^m+(-1)^{l_4}p^{\frac{m+m_2}{2}}}{p^{m_1+m_2}}$ &
$\substack{\frac{(p^{m_2}-1)}{2}(p^{m-m_2}-\frac{(-1)^{l_2}\sqrt[]{p^m}}{p^{m_2}}) \\ \cdot(p^{m_2}-p^{m_2-m_1}+\frac{(-1)^{l_7}\sqrt[]{p^{m_2}}(p^{m_1}-1)}{p^{m_1}})}$\\
\bottomrule
\end{tabular}
\end{center}
\end{table}

\begin{table}[h]
\tiny
\begin{center}
\caption{The weight distribution of $\overline{\cC_{D}}$ in Theorem \ref{th-4.2} ($\frac{m}{m_1}$ is even, $\frac{m_2}{m_1}$ is even and $\frac{m}{m_2}$ is odd).}\label{tab5.32}
\begin{tabular}{@{}ll@{}}
\toprule%
Weight & Frequency  \\
\midrule
$0$ & $1$\\
$p^{m-m_1}+\frac{(p^{m_1}-1)(-1)^{l_2}\sqrt[]{p^m}}{p^{m_1}}$ & $p^{m_2}-1$ \\
$p^{m-m_1}-\frac{p^m}{p^{m_1+m_2}}$ & $p^{m-m_2}-1$\\
$p^{m-m_1}+\frac{(-1)^{l_2}\sqrt[]{p^m}(p^{m_1}-1)}{p^{m_1}}-\frac{p^m+(p^{m_1}-1)(-1)^{l_4}p^{\frac{m+m_2}{2}}}{p^{m_1+m_2}}$ &
$\substack{\frac{(p^{m_2}-1)}{2}(p^{m-m_2}+\frac{(-1)^{l_4}p^{\frac{m+m_2}{2}}}{p^{m_2}}) \\ \cdot(p^{m_2-m_1}+\frac{(-1)^{l_7}\sqrt[]{p^{m_2}}(p^{m_1}-1)}{p^{m_1}})}$\\
$p^{m-m_1}+\frac{(-1)^{l_2}\sqrt[]{p^m}(p^{m_1}-1)}{p^{m_1}}-\frac{p^m-(p^{m_1}-1)(-1)^{l_4}p^{\frac{m+m_2}{2}}}{p^{m_1+m_2}}$ &
$\substack{\frac{(p^{m_2}-1)}{2}(p^{m-m_2}-\frac{(-1)^{l_4}p^{\frac{m+m_2}{2}}}{p^{m_2}}) \\ \cdot(p^{m_2-m_1}-\frac{(-1)^{l_7}\sqrt[]{p^{m_2}}(p^{m_1}-1)}{p^{m_1}})}$\\
$p^{m-m_1}+\frac{(-1)^{l_2}\sqrt[]{p^m}(p^{m_1}-1)}{p^{m_1}}-\frac{p^m}{p^{m_1+m_2}}$ &
$(p^{m-m_2}-1)(p^{m_2}-1)$\\
$p^{m-m_1}+\frac{(-1)^{l_2}\sqrt[]{p^m}(p^{m_1}-1)}{p^{m_1}}-\frac{p^m-(-1)^{l_4}p^{\frac{m+m_2}{2}}}{p^{m_1+m_2}}$ &
$\substack{\frac{(p^{m_2}-1)}{2}(p^{m-m_2}+\frac{(-1)^{l_4}p^{\frac{m+m_2}{2}}}{p^{m_2}}) \\ \cdot(p^{m_2}-p^{m_2-m_1}-\frac{(-1)^{l_7}\sqrt[]{p^{m_2}}(p^{m_1}-1)}{p^{m_1}})}$\\
$p^{m-m_1}+\frac{(-1)^{l_2}\sqrt[]{p^m}(p^{m_1}-1)}{p^{m_1}}-\frac{p^m+(-1)^{l_4}p^{\frac{m+m_2}{2}}}{p^{m_1+m_2}}$ &
$\substack{\frac{(p^{m_2}-1)}{2}(p^{m-m_2}-\frac{(-1)^{l_4}p^{\frac{m+m_2}{2}}}{p^{m_2}}) \\ \cdot(p^{m_2}-p^{m_2-m_1}+\frac{(-1)^{l_7}\sqrt[]{p^{m_2}}(p^{m_1}-1)}{p^{m_1}})}$\\

\bottomrule
\end{tabular}
\end{center}
\end{table}

\begin{proof}
Similarly to the proof of Theorem \ref{th-4.1}, the desired conclusions follow from Lemmas \ref{lem-omega}, \ref{lem-N}, \ref{lem-f} and Lemmas \ref{LH}, \ref{lem-2N}, \ref{lem-weil}.
\end{proof}

We use Magma to give some examples in the following which are consistent
with the results in Theorem \ref{th-4.2}.

\begin{example}
Let $(p,m,m_1,m_2)=(3,5,1,1)$. Then the code $\overline{\cC_{D}}$ in Theorem \ref{th-4.2} has parameters $[81,6,48]$ and weight enumerator $1+90z^{48}+144z^{51}+240z^{54}+180z^{57}+72z^{60}+2z^{81}$. Since $\overline{\cC_{D}}$ is 3-divisible, we deduce $\overline{\cC_{D}}$ is self-orthogonal by Lemma \ref{LH}.
\end{example}

\begin{example}\label{exa-optimal}
Let $(p,m,m_1,m_2)=(3,4,1,1)$. Then the code $\overline{\cC_{D}}$ in Theorem \ref{th-4.2} has parameters $[21,5,12]$ and weight enumerator $1+100z^{12}+120z^{15}+20z^{18}+2z^{21}$. This code is optimal with respect to the Griesmer bound.  Since $\overline{\cC_{D}}$ is 3-divisible, we deduce $\overline{\cC_{D}}$ is self-orthogonal by Lemma \ref{LH}.
\end{example}

\begin{example}
Let $(p,m,m_1,m_2)=(3,6,1,2)$. Then the code $\overline{\cC_{D}}$ in Theorem \ref{th-4.2} has parameters $[261,4,216]$ and weight enumerator $1+80z^{216}+1800z^{228}+2304z^{231}+640z^{234}+1440z^{237}+288z^{240}+8z^{261}$.  Since $\overline{\cC_{D}}$ is 3-divisible, we deduce $\overline{\cC_{D}}$ is self-orthogonal by Lemma \ref{LH}.
\end{example}

\section{The locality of $\overline{\cC_D}$ and the minimum distance of $\overline{\cC_D}^\perp$}
In this section, we determine the locality of $\overline{\cC_D}$ and the minimum distance of $\overline{\cC_D}^\perp$.

\begin{theorem}\label{th-locality}
 Let $p$ be an odd prime. Let $m, m_1, m_2$ be three positive integers such that $m_1 \mid m$ and $m_2 \mid m$. Then  $\overline{\cC_D}$ is a locally recoverable code with locality $2$ and $\overline{\cC_D}^\perp$ has minimum distance $3$.
\end{theorem}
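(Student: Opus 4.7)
The plan is to exploit the symmetry of the defining set $D$ under negation and the presence of the all-one vector to produce an explicit repair relation and, simultaneously, an explicit weight-$3$ codeword in the dual.

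First I would observe two structural facts about $D$: since $\tr_{p^m/p^{m_1}}((-x)^2)=\tr_{p^m/p^{m_1}}(x^2)$, the set $D$ is closed under $x\mapsto -x$, and since $\tr_{p^m/p^{m_1}}(0)=0$ we also have $0\in D$. In particular, for every nonzero $x\in D$ the three positions $x,\,-x,\,0$ are distinct (as $p$ is odd). I would then pick a $\gf_{p^{m_2}}$-basis $b_1,\ldots,b_{m/m_2}$ of $\gf_{p^m}$ and consider the generator matrix of $\overline{\cC_D}$ whose rows are $(\tr_{p^m/p^{m_2}}(b_i x))_{x\in D}$ together with the all-one row $\mathbf{1}$. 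The column $\bg_x$ at position $x$ then has the form $(\tr_{p^m/p^{m_2}}(b_1 x),\ldots,\tr_{p^m/p^{m_2}}(b_{m/m_2} x),1)^{T}$, and the additivity of the trace immediately gives
\begin{equation*}
\bg_x+\bg_{-x}=2\,\bg_0,\qquad \bg_0=(0,\ldots,0,1)^T.
\end{equation*}

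From this identity the locality claim is immediate. For any codeword $\bc_{(b,c)}=(\tr_{p^m/p^{m_2}}(bx))_{x\in D}+c\mathbf{1}$, denote its coordinate at $x$ by $c_x$; then $c_x+c_{-x}=2c=2c_0$, so $c_x=2c_0-c_{-x}$ (and $c_0=\tfrac12(c_x+c_{-x})$ for any nonzero $x\in D$, using that $2$ is invertible since $p$ is odd). Hence every coordinate is a linear combination of exactly two other coordinates, which gives locality $\le 2$. To rule out locality $1$ I would argue directly from the columns: if $\bg_x$ were a scalar multiple of some $\bg_y$ with $y\ne x$, the last entries would force the scalar to be $1$; comparing the remaining entries and using that $\{b_i\}$ is a basis of $\gf_{p^m}$ over $\gf_{p^{m_2}}$ forces $x=y$, a contradiction. (The edge case $y=0$ forces $\tr_{p^m/p^{m_2}}(b_i x)=0$ for all $i$, hence $x=0$.) Therefore the locality of $\overline{\cC_D}$ is exactly $2$.

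For the dual distance, the relation $\bg_x+\bg_{-x}-2\bg_0=\mathbf{0}$ for any fixed nonzero $x\in D$ is a linear dependence among three distinct columns, which yields a codeword of Hamming weight $3$ in $\overline{\cC_D}^{\perp}$, giving $d^{\perp}\le 3$. The reverse inequality follows from the same column analysis used above: no column is zero (each has last entry $1$, and $\bg_0$ has last entry $1$), so $d^{\perp}\ge 2$; and the argument ruling out locality $1$ shows that no two columns are proportional, so $d^{\perp}\ge 3$. Combining, $d^{\perp}=3$ and the code is projective.

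The only place where care is needed is in the locality-$1$ exclusion and the dual-distance lower bound, where one must be sure the basis argument truly forces $x=y$; but this is clean because the last row of the generator matrix (the all-one row) pins down the scalar to be $1$ and the remaining rows then detect the difference $x-y$ through a full $\gf_{p^{m_2}}$-basis of $\gf_{p^m}$. Everything else is straightforward linear algebra, and no further use of the character-sum machinery of Section~\ref{sec3} is needed for this theorem.
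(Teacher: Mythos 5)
Your proof is correct, and it follows the same overall framework as the paper's (work with the generator matrix whose rows are the all-one vector and the trace rows, exhibit an explicit three-column dependence through the column at the position $0\in D$, and get $d^\perp\ge 3$ from the nondegeneracy of the trace form, which is exactly how the paper shows any two columns are $\gf_{p^{m_2}}$-independent). The one genuine difference is the repair relation: you use the negation symmetry of $D$, giving $\bg_x+\bg_{-x}=2\bg_0$, whereas the paper uses the scaling symmetry, writing $\bg_i=u\,\bg_j+(1-u)\,\bg_n$ with $d_j=u^{-1}d_i$ for $u\in\gf_{p^{m_2}}\setminus\{0,1\}$. Your choice is slightly more robust: it never leaves $D$, since $\tr_{p^m/p^{m_1}}((-x)^2)=\tr_{p^m/p^{m_1}}(x^2)$ holds unconditionally, while the paper's choice needs $u^{-1}d_i\in D$, i.e.\ $u^2\in\gf_{p^{m_1}}$, which is automatic when $m_2\mid m_1$ but in the case $m_1\mid m_2$ requires taking $u$ in the subfield $\gf_{p^{m_1}}\cap\gf_{p^{m_2}}$ (e.g.\ $u\in\gf_p\setminus\{0,1\}$), a point the paper glosses over. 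You also explicitly rule out locality $1$ via the scalar-multiple argument, which the paper does not bother to do, and your weight-$3$ dual codeword $(1,1,-2)$ supported on $\{x,-x,0\}$ is the same kind of witness the paper extracts from its repair relation. The only shared caveat, inherited from the theorem statement itself, is the implicit assumption that $D$ contains nonzero elements (so that the three positions $x,-x,0$ actually exist); in the degenerate situations where $D=\{0\}$ (e.g.\ $m=m_1$) both your argument and the paper's break down, so this is not a defect of your proposal relative to the paper.
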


\begin{proof}
  Let $\gf_{p^m}^*=\langle\alpha\rangle$. Then $\left\{\alpha^0, \alpha^1, \cdots, \alpha^{\frac{m}{m_2}-1}\right\}$ is a $\gf_{p^{m_2}}$-basis of $\gf_{p^m}$. Let $d_1, d_2, \cdots, d_{n-1},d_n$ be all the elements in $D$. For convenience, let  $d_n = 0$ due to $0\in D$.
By definition, the generator matrix $G$ of $\overline{\cC_D}$ is given by
\begin{eqnarray*}
G:=\left[
\begin{array}{cccc}
1&1&\cdots&1 \\
 \tr_{p^m/p^{m_2}}(\alpha^{0}d_{1})& \tr_{p^m/p^{m_2}}(\alpha^{0}d_{2})& \cdots &\tr_{p^m/p^{m_2}}(\alpha^{0}d_{n}) \\
\tr_{p^m/p^{m_2}}(\alpha^{1}d_{1})& \tr_{p^m/p^{m_2}}(\alpha^{1}d_{2})& \cdots &\tr_{p^m/p^{m_2}}(\alpha^{1}d_{n}) \\
\vdots &\vdots &\ddots &\vdots \\
\tr_{p^m/p^{m_2}}(\alpha^{\frac{m}{m_2}-1}d_{1})& \tr_{p^m/p^{m_2}}(\alpha^{\frac{m}{m_2}-1}d_{2})& \cdots &\tr_{p^m/p^{m_2}}(\alpha^{\frac{m}{m_2}-1}d_{n})
\end{array}\right].
\end{eqnarray*}

In the following, we first determine the locality of $\overline{\cC_D}$.
For convenience, we assume that $\bg_i$ denotes the $i$-th column of $G$, i.e., $$\bg_i=(1, \tr_{p^m/p^{m_2}}(\alpha^0d_i), \tr_{p^m/p^{m_2}}(\alpha^1 d_i), \cdots, \tr_{p^m/p^{m_2}}(\alpha^{\frac{m}{m_2}-1}d_i))^T,$$ where $1 \leq i \leq n$.
Then
$$\bg_n=(1,0,0,\cdots,0)^T.$$
For any $d_i$ with $1\leq i \leq n-1$ and $d_n=0$, if we choose any $u\in \gf_{p^{m_2}}\backslash \{0,1\}$, $v=1-u$ and $d_j=u^{-1}d_i$, then we can verify that
the following holds:
 \begin{eqnarray*}
\left\{\begin{array}{lll}
1&=&u+v,\\
 \tr_{p^m/p^{m_2}}(\alpha^0d_i) &=&u\tr_{p^m/p^{m_2}}(\alpha^{0}d_j)+v\tr_{p^m/p^{m_2}}(\alpha^{0}d_n), \\
  \tr_{p^m/p^{m_2}}(\alpha^1d_i)&=&u\tr_{p^m/p^{m_2}}(\alpha^{1}d_j)+v\tr_{p^m/p^{m_2}}(\alpha^{1}d_n), \\
 & \vdots &\\
  \tr_{p^m/p^{m_2}}(\alpha^{\frac{m}{m_2}-1}d_{i})&=&u\tr_{p^m/p^{m_2}}(\alpha^{\frac{m}{m_2}-1}d_j)+v\tr_{p^m/p^{m_2}}(\alpha^{\frac{m}{m_2}-1}d_n).
\end{array}\right.
\end{eqnarray*}
In other words, for any column $\bg_i$ with $1\leq i \leq n-1$, there exist two other columns $\bg_j$ and $\bg_n$ such that $\bg_i$ is a linear combination of them.
Besides, $\bg_n$ is also a linear combination of such $\bg_i$ and $\bg_j$.
Hence $\overline{\cC_D}$ has locality $2$.

Now we determine the minimum distance $d^\perp$ of $\overline{\cC_D}^\perp$. We claim that any two columns of $G$ are $\gf_{p^{m_2}}$-linearly independent.
In fact, if there exist two columns $\bg_i$ and $\bg_j$ for $1 \leq i \neq j \leq n$ which are $\gf_{p^{m_2}}$-linearly dependent, then their exist $u,v\in \gf_{p^{m_2}}^*$
such that
 \begin{eqnarray*}
\left\{\begin{array}{lll}
0&=&u+v,\\
 0 &=&u\tr_{p^m/p^{m_2}}(\alpha^{0}d_i)+v\tr_{p^m/p^{m_2}}(\alpha^{0}d_j), \\
 0 &=&u\tr_{p^m/p^{m_2}}(\alpha^{1}d_i)+v\tr_{p^m/p^{m_2}}(\alpha^{1}d_j), \\
 & \vdots &\\
 0 &=&u\tr_{p^m/p^{m_2}}(\alpha^{\frac{m}{m_2}-1}d_i)+v\tr_{p^m/p^{m_2}}(\alpha^{\frac{m}{m_2}-1}d_j).
\end{array}\right.
\end{eqnarray*}
Then we have $\tr_{p^m/p^{m_2}}(\alpha^{t}(d_i-d_j))=0$ for all $0\leq t \leq \frac{m}{m_2}-1$.
For any $x=\sum_{t=0}^{\frac{m}{m_2}-1}x_i\alpha^i\in \gf_{p^m}$ and each $x_i\in \gf_{p^{m_2}}$, we then derive
\begin{eqnarray*}& &\tr_{p^m/p^{m_2}}(x(d_i-d_j))\\
&=&\tr_{p^m/p^{m_2}}\left(\sum_{t=0}^{\frac{m}{m_2}-1}x_t\alpha^t(d_i-d_j)\right)\\
&=&\sum_{t=0}^{\frac{m}{m_2}-1}x_t\tr_{p^m/p^{m_2}}(\alpha^t(d_i-d_j))=0.
\end{eqnarray*}
This contradicts with $\left \lvert \ker(\tr_{p^m/p^{m_2}}) \right\rvert =p^{m-m_2}<p^m$. Hence any two columns of $G$ are $\gf_{p^{m_2}}$-linearly independent.
Then we have $d^\perp\geq 3$. Besides, since there exist three columns of $G$ which are $\gf_{p^{m_2}}$-linearly dependent,
we have $d^\perp=3$.
\end{proof}

By Theorems \ref{th-4.1}, \ref{th-4.2} and Theorem \ref{th-locality}, we have the following corollary.

\begin{corollary}
The dual of $\overline{\cC_D}$ in Theorem \ref{th-4.1} or \ref{th-4.2} is at least almost optimal with respect to the sphere-packing bound.
\end{corollary}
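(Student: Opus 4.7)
\medskip

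The plan is to invoke Theorem \ref{th-locality} and then apply the sphere-packing bound to $\overline{\cC_D}^\perp$ to show that no $[n,k^\perp]_{p^{m_2}}$-code with these parameters can admit minimum distance $\geq 5$; this places $d^\perp=3$ exactly one below the largest distance compatible with the sphere-packing bound, which is by definition almost optimal with respect to that bound. More precisely, by Theorem \ref{th-locality} the dual is an $[n,\,n-\tfrac{m}{m_2}-1,\,3]_{p^{m_2}}$ code, where $n$ is the length recorded in Theorem \ref{th-4.1} or \ref{th-4.2}. Setting $q=p^{m_2}$ and $r=n-k^\perp=\tfrac{m}{m_2}+1$, so that $q^r=p^{m+m_2}$, the sphere-packing bound for distance $d\geq 5$ (equivalently $t=\lfloor(d-1)/2\rfloor=2$) would require
\begin{equation*}
V_q(n,2)\;=\;1+n(q-1)+\binom{n}{2}(q-1)^2\;\leq\;q^r,
\end{equation*}
and the corollary follows once I establish the reverse strict inequality.

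The dominant term is $\binom{n}{2}(q-1)^2$, so I would target the cleaner sufficient condition $\binom{n}{2}(q-1)^2>p^{m+m_2}$. Using $p^{m-m_1}-1\geq(p-1)p^{m-m_1-1}$ and $(p^{m_2}-1)^2\geq(p-1)^2p^{2m_2-2}$, the odd sub-cases of Theorems \ref{th-4.1} and \ref{th-4.2} (where $n=p^{m-m_1}$ exactly) reduce to the elementary inequality $m\geq 2m_1-m_2+2$, which is a direct consequence of $\tfrac{m}{m_1}\geq 3$ together with the divisibility relation between $m_1$ and $m_2$. In the even sub-cases one has $n=p^{m-m_1}+\varepsilon$ with $|\varepsilon|\leq p^{m/2}$, and the hypothesis $\tfrac{m}{m_1}\geq 4$ yields $n\geq 2p^{m/2}$, so that $\binom{n}{2}(q-1)^2\geq p^{m}(p^{m_2}-1)^2>p^{m+m_2}$, where the last inequality holds whenever $p^{m_2}\geq 3$.

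Combining these estimates gives $V_q(n,2)>q^r$, so the sphere-packing bound precludes any $[n,k^\perp]_q$-code of minimum distance $\geq 5$. The largest distance compatible with that bound for these parameters is therefore $4$, and $d^\perp=3=4-1$ establishes that $\overline{\cC_D}^\perp$ is almost optimal with respect to the sphere-packing bound. The main obstacle is carrying out the length-estimate (namely $n=p^{m-m_1}$ in the odd cases and $n\geq 2p^{m/2}$ in the even cases) uniformly across the four parity sub-cases of Theorem \ref{th-4.2}, where the sign of the $\pm p^{m/2}$-correction depends on several Gaussian-sum parities and must be handled carefully in the boundary cases of smallest allowed $m$; once this is in hand, all remaining steps are immediate.
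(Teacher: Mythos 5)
Your proposal is correct and follows essentially the same route as the paper: the paper's proof is a one-line appeal to the sphere-packing bound, asserting that no code over $\gf_{p^{m_2}}$ with the length and dimension of $\overline{\cC_D}^\perp$ can have minimum distance $5$, so that $d^\perp=3$ is at least almost optimal. You simply carry out the volume estimate $1+n(q-1)+\binom{n}{2}(q-1)^2>p^{m+m_2}$ explicitly (with the length bounds $n=p^{m-m_1}$ in the odd cases and $n\geq 2p^{m/2}$ in the even cases), which fills in the detail the paper leaves as ``easy to prove.''
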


\begin{proof}
By the sphere-packing bound, it is easy to prove that their exists no linear code over $\gf_{p^{m_2}}$ of the same length and dimension as those of  $\overline{\cC_D}^\perp$
and minimum distance 5. Then the desired conclusion follows.
\end{proof}

By the Code Tables at http://www.codetables.de/, we find that the dual of $\overline{\cC_D}$ in Theorem \ref{th-4.1} or \ref{th-4.2} is optimal or has best known parameters in some cases and is almost optimal in some other cases.

\begin{table}[h]
\begin{center}
\caption{The optimality of $\overline{\cC_D}^{\perp}$.}\label{tab-optimalcode}
\begin{tabular}{llll}
\toprule
Conditions & Code & Parameters & Optimality \\
\midrule
$p=3,m=4,m_1=m_2=1$ & $\overline{\cC_D}^{\perp}$ & $[21, 16, 3]$ & Optimal\\
$p=3,m=5,m_1=m_2=1$ & $\overline{\cC_D}^{\perp}$ & $[81,75,3]$ & Optimal\\
$p=3,m=4,m_1=1,m_2=2$ & $\overline{\cC_D}^{\perp}$ & $[21,18,3]$ & Optimal\\
$p=5,m=4,m_1=m_2=1$ & $\overline{\cC_D}^{\perp}$ & $[105, 100, 3]$ & Best known parameters\\
$p=3,m=3,m_1=m_2=1$ & $\overline{\cC_D}^{\perp}$ & $[9, 5, 3]$ & Almost optimal\\
$p=5,m=3,m_1=m_2=1$ & $\overline{\cC_D}^{\perp}$ & $[25, 21, 3]$ & Almost optimal\\
$p=7,m=3,m_1=m_2=1$ & $\overline{\cC_D}^{\perp}$ & $[49, 45, 3]$ & Almost optimal\\
$p=3,m=6,m_1=3,m_2=1$ & $\overline{\cC_D}^{\perp}$ & $[53,46,3]$ & Almost optimal\\
$p=3,m=6,m_1=2,m_2=1$ & $\overline{\cC_D}^{\perp}$ & $[81,74,3]$ & Almost optimal\\
$p=3,m=6,m_1=m_2=2$ & $\overline{\cC_D}^{\perp}$ & $[81, 77, 3]$ & Almost optimal\\
\bottomrule
\end{tabular}
\end{center}
\end{table}

A linear code is said to be projective if its dual has minimum distance at least 3.
Theorem \ref{th-locality} implies the following corollary.
\begin{corollary}\label{cor}
The linear code $\overline{\cC_D}$ in Theorem \ref{th-4.1} or \ref{th-4.2} is projective.
\end{corollary}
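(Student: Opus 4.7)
The claim to be established is that the code $\overline{\cC_D}$ appearing in Theorem \ref{th-4.1} or Theorem \ref{th-4.2} is projective, where ``projective'' is defined in the sentence immediately preceding the corollary as the condition that the dual has minimum distance at least $3$.

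My plan is that this is a one-line deduction from Theorem \ref{th-locality}. That theorem already asserts, under exactly the hypotheses $m_1\mid m$ and $m_2\mid m$ (which hold in both Theorem \ref{th-4.1}, where $m_2\mid m_1\mid m$, and Theorem \ref{th-4.2}, where $m_1\mid m_2\mid m$), that the dual code $\overline{\cC_D}^\perp$ has minimum distance exactly $3$. So I would simply invoke Theorem \ref{th-locality} to get $d(\overline{\cC_D}^\perp)=3\geq 3$, then cite the definition of projective given just above the corollary, and conclude.

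Concretely, the write-up would read: by Theorem \ref{th-locality}, the minimum distance of $\overline{\cC_D}^\perp$ equals $3$; since a linear code is projective exactly when its dual has minimum distance at least $3$, the code $\overline{\cC_D}$ is projective in both parameter regimes covered by Theorems \ref{th-4.1} and \ref{th-4.2}.

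There is no genuine obstacle here; the only thing to verify is that the divisibility hypotheses in Theorems \ref{th-4.1}/\ref{th-4.2} imply the hypotheses of Theorem \ref{th-locality}, which is immediate because each of $m_2\mid m_1\mid m$ and $m_1\mid m_2\mid m$ entails $m_1\mid m$ and $m_2\mid m$. Thus the entire proof of the corollary is a direct appeal to Theorem \ref{th-locality} together with the definition of a projective code, and I would keep it to one or two sentences.
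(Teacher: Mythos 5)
Your proposal is correct and matches the paper exactly: the paper itself states the corollary as an immediate consequence of Theorem \ref{th-locality}, which gives $d(\overline{\cC_D}^\perp)=3$, combined with the definition of a projective code given just before the corollary. Your one-line deduction, including the observation that the divisibility hypotheses of Theorems \ref{th-4.1} and \ref{th-4.2} imply those of Theorem \ref{th-locality}, is precisely the intended argument.
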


In the following, we present some optimal or almost locally recoverable codes from $\overline{\cC_D}$.

\begin{example}\label{optimal-1}
Let $m=3,m_1=m_2=1$ and $p=3$. Then $\overline{\cC_D}$ has parameters $[9,4,4]$ by Theorem \ref{th-4.1} and locality $r=2$ by Theorem \ref{th-locality}.
For $n=9,d=4,r=2,q=3$, by the Griesmer bound we have
\begin{eqnarray*}
k_{\text{opt}}^{(q)}(n-(r+1),d)=k_{\text{opt}}^{(q)}(6,4)=2.
\end{eqnarray*}
By the right hand of Cadambe-Mazumdar bound in Lemma \ref{lem-CMbound}, we have
\begin{eqnarray*}\label{eqn-CMbound}
\mathop{\min}_{t \in \mathbb{Z}^{+}}\left [rt+k_{\text{opt}}^{(q)}(n-t(r+1),d)\right]
\leq \mathop{\min}_{t=1}\left [rt+k_{\text{opt}}^{(q)}(n-t(r+1),d)\right]=4,
\end{eqnarray*}
which implies that $\overline{\cC_D}$ is a $[9,4,4; 2]_3$ \textbf{$k$-optimal locally recoverable code} according to the Cadambe-Mazumdar bound.
By the right hand of the Singleton-like bound in Lemma \ref{like}, we have
\begin{eqnarray*}
n-k-\left\lceil\frac{k}{r} \right\rceil+2=5=d+1,
\end{eqnarray*}
which implies that  $\overline{\cC_D}$ is a $[9,4,4; 2]_3$ \textbf{almost $d$-optimal locally recoverable code} according to the Singleton-like bound.
\end{example}

\begin{example}\label{optimal-2}
Let $m=4,m_1=m_2=1$ and $p=3$. Then $\overline{\cC_D}$ has parameters $[21,5,12]$ by Theorem \ref{th-4.1} and locality $2$ by Theorem \ref{th-locality}.
For $n=21,d=12,r=2,q=3$, by the Griesmer bound, we have
\begin{eqnarray*}
k_{\text{opt}}^{(q)}(n-(r+1),d)=k_{\text{opt}}^{(q)}(18,12)=3.
\end{eqnarray*}
By the right hand of Cadambe-Mazumdar bound in Lemma \ref{lem-CMbound}, we have
\begin{eqnarray*}\label{eqn-CMbound}
\mathop{\min}_{t \in \mathbb{Z}^{+}}\left [rt+k_{\text{opt}}^{(q)}(n-t(r+1),d)\right]
\leq \mathop{\min}_{t=1}\left [rt+k_{\text{opt}}^{(q)}(n-t(r+1),d)\right]=5,
\end{eqnarray*}
which implies that $\overline{\cC_D}$ is a $[21,5,12; 2]_3$ \textbf{$k$-optimal locally recoverable code} according to the Cadambe-Mazumdar bound.
\end{example}

\begin{example}\label{optimal-3}
Let $m=4,m_1=1,m_2=2$ and $p=3$. Then $\overline{\cC_D}$ has parameters $[21,3,16]$ by Theorem \ref{th-4.2} and locality $2$ by Theorem \ref{th-locality}.
For $n=21,d=16,r=2$ and $q=9$, by the Griesmer bound we have
\begin{eqnarray*}
k_{\text{opt}}^{(q)}(n-(r+1),d)=k_{\text{opt}}^{(q)}(18,16)=2.
\end{eqnarray*}
By the right hand of Cadambe-Mazumdar bound in Lemma \ref{lem-CMbound}, we have
\begin{eqnarray*}\label{eqn-CMbound}
\mathop{\min}_{t \in \mathbb{Z}^{+}}\left [rt+k_{\text{opt}}^{(q)}(n-t(r+1),d)\right]
= \mathop{\min}_{t=1}\left [rt+k_{\text{opt}}^{(q)}(n-t(r+1),d)\right]=4
\end{eqnarray*}
as $n-t(r+1)\geq d$ iff $t=1$.
Then $\overline{\cC_D}$ is a $[21,3,16; 2]_9$ \textbf{almost $k$-optimal locally recoverable code} according to the Cadambe-Mazumdar bound.
\end{example}

\section{Concluding remarks}\label{sec6}
In this paper, inspired  by the work in \cite{D4}, we presented a new construction of linear codes  $\overline{\cC_D}$ based on the function $f(x)=x^2$.
Using quadratic Gaussian sums, we derived the parameters and weight distributions of $\overline{\cC_D}$ in several cases.
In this cases, $\overline{\cC_D}$ has only a few weights and was proved to be both $p$-divisible and self-orthogonal.
Besides, it was also proved that $\overline{\cC_D}$ is projective and has locality only 2.
In particular, we derived some optimal or almost optimal linear codes in Example \ref{exa-optimal} and Table \ref{tab-optimalcode}.
We also obtained some optimal or almost optimal locally recoverable codes in Examples \ref{optimal-1}, \ref{optimal-2} and \ref{optimal-3}.
These results indicate that our codes not only have good parameters but also have nice applications in quantum codes, lattices and distributed storage.

Compared with the construction by Ding and Ding in \cite[Theorems 1 and 2]{D4}, the construction given in this paper has the following advantages.
\begin{enumerate}
\item The construction given in this paper can produce linear codes with higher code rate. For instance, if $(p,m,m_1,m_2) = (3,6,2,1)$, then our code $\overline{\cC_D}$ is a $[81,7,48]$ ternary code, while the code in \cite[Example 1]{D4} is a $[80, 5, 48]$ ternary code; if $(p,m,m_1,m_2)=(5,4,1,1)$, then our code is a $[105, 5, 80]$ code over $\gf_5$, while the code in \cite[Example 2]{D4} is a $[104, 4, 80]$ code over $\gf_5$. Hence, our codes improve the parameters of those in \cite{D4}.
\item Our codes are projective by Corollary \ref{cor}, while the codes in \cite[Theorems 1 and 2]{D4} are not projective.
\end{enumerate}

\section*{}
\textbf{Funding}
Z. Heng's research was supported in part by the National Natural Science Foundation of China under Grant 12271059, in part by the open research
fund of National Mobile Communications Research Laboratory£¬Southeast University (No. 2024D10), and in part by the Fundamental Research Funds for the Central Universities, CHD, under Grant 300102122202.

\end{document}